%% file: mainQBN.tex
\documentclass[a4paper,UKenglish,cleveref,thm-restate,pdfa]{lipics-v2021}
\nolinenumbers
\pdfoutput=1 
\hideLIPIcs 

\input{./macrosBPN} 

\title{Quantum Bayesian Networks:\texorpdfstring{\\}{ }Compositionality and Typing via Linear Logic}

\titlerunning{Quantum Bayesian Networks: Compositionality and Typing via Linear Logic}

\author{R{\'e}mi {Di Guardia}}{Université Paris Cité, CNRS, Inria, IRIF, F-75013, Paris, France \and \url{https://remidig.github.io/}}{remi.di.guardia@ens-lyon.org}{https://orcid.org/0009-0004-8632-108X}{}
\author{Thomas Ehrhard}{Université Paris Cité, CNRS, IRIF, F-75013, Paris, France}{ehrhard@irif.fr}{https://orcid.org/0000-0001-5231-5504}{}
\author{Claudia Faggian}{Université Paris Cité, CNRS, IRIF, F-75013, Paris, France}{faggian@irif.fr}{}{}

\authorrunning{R. {Di Guardia}, T. Ehrhard, and C. Faggian}

\Copyright{R{\'e}mi {Di Guardia}, Thomas Ehrhard, and Claudia Faggian}

\ccsdesc{Mathematics of computing~Bayesian networks}
\ccsdesc{Theory of computation~Quantum computation theory}
\ccsdesc{Theory of computation~Linear logic}

\keywords{Quantum Bayesian Networks, Quantum Causal Models, Bayesian Networks, Proof-Nets, Linear Logic}

\relatedversiondetails[cite=QBPN_fscd]{Conference version}{https://doi.org/10.4230/LIPIcs.FSCD.2026.16}

\funding{by the Plan France 2030 through the PEPR project EPiQ (ANR-22-PETQ0007) and by the European Union through the MSCA SE project QCOMICAL (Grant Agreement ID: 101182520).}

\acknowledgements{We are grateful to Benoît Valiron for insightful discussions and to the anonymous referees for their valuable comments and the suggested references.}

\EventEditors{Frank Pfenning}
\EventNoEds{1}
\EventLongTitle{11th International Conference on Formal Structures for Computation and Deduction (FSCD 2026)}
\EventShortTitle{FSCD 2026}
\EventAcronym{FSCD}
\EventYear{2026}
\EventDate{July 20--23, 2026}
\EventLocation{Lisbon, Portugal}
\EventLogo{}
\SeriesVolume{378}
\ArticleNo{16}

\begin{document}
	
\maketitle
	
\begin{abstract}
Quantum Bayesian networks~\cite{GBN2014} provide a mathematical formalism to describe causal relations, to analyse correlations, and to predict the probabilities of measurement outcomes, in systems involving both \emph{classical and quantum} data.
They generalize Pearl's Bayesian networks~\cite{Pearl_book}---prominent graphical models for classical probabilistic reasoning and inference.

The goal of this paper is to bring compositional principles and a typing discipline into this setting. 
A key feature of our compositional semantics is that when all causes are classical, it coincides with the standard factor-based semantics of Bayesian networks, while in the purely quantum case it reduces to tensor networks.
We then propose a typed formalism based on linear logic proof-nets, where types ensure well-behaved composition of systems, and which we prove sound and complete with respect to quantum Bayesian networks.
\end{abstract}

\input{00_Introduction}
\input{01_Background}
\input{02_QuantumDistributions}
\input{03_QBNs}
\input{04_QPNs}
\input{05_Conclusions}

\bibliography{biblio_extracted}

\newpage

\input{99_Appendix}

\end{document}

%% file: macrosBPN.tex
\usepackage
{todonotes}

\usepackage{graphicx}
\usepackage{braket}
\usepackage{tikz}
\usetikzlibrary{arrows, decorations.markings, decorations.pathmorphing, decorations.pathreplacing, shapes.misc, shapes.geometric, arrows.meta, fit, patterns}
\usetikzlibrary{calc}

\usepackage{amssymb}
\usepackage{stmaryrd}
\usepackage{proof}

\usepackage{mathrsfs}

\usepackage{mathtools}
\usepackage{ifthen}

\usepackage{cmll}
\usepackage{xspace}
\usepackage{adjustbox}
\adjustboxset{max width = \linewidth, center = \linewidth}
\usepackage{ebproof} 
\ebproofset{
	template={$\vdash \inserttext$},
	right label template={$\scriptstyle(\inserttext)$},
	left label template={$\scriptstyle\inserttext$},
	center=false,
}
\usepackage{url}

\NewDocumentCommand{\ntodo}{O{}mm}{\todo[#1]{#2: #3}}
\NewDocumentCommand{\remi}{O{}m}{\ntodo[color=olive!50,#1]{R}{#2}}

\Crefname{equation}{Eq.}{Eqs.}
\Crefname{section}{Sec.}{Sections}
\Crefname{figure}{Fig.}{Figs.}
\Crefname{theorem}{Thm.}{Thm.}
\Crefname{proposition}{Prop.}{Prop.}
\Crefname{remark}{Rem.}{Rem.}
\Crefname{corollary}{Cor.}{Cor.}
\Crefname{definition}{Def.}{Def.}
\Crefname{example}{Ex.}{Ex.}
\Crefname{property}{Property}{Properties}

\adjustboxset{max width = \linewidth, center = \linewidth}

\NewDocumentEnvironment{genscope}{O{black}O{solid,->}O{0.8pt}}{
\begin{scope}[
every node/.style={circle,minimum size=8mm,inner sep=0,draw=#1,line width=#3},
every edge/.style={draw=#1,#2,line width=#3},
every path/.style={thick,draw=#1,#2,-,line width=#3}]
\color{#1}
}{
\end{scope}
}

\tikzset{
    named edge/.style={draw=none,inner sep=1pt,outer sep=2pt,rectangle,minimum size=0pt,auto},
    named path/.style={named edge,sloped},
}

\newcommand*{\resp}{resp.}
\newcommand*{\ie}{\emph{i.e.}\xspace}
\newcommand*{\eg}{\emph{e.g.}\xspace}
\newcommand*{\aka}{\emph{a.k.a.}}
\newcommand*{\wrt}{\emph{w.r.t.}}

\newcommand{\textdef}[1]{\textbf{#1}}
\newcommand*{\definitive}[1]{\textbf{#1}}

\newcommand*{\defeq}{\stackrel{\text{def}}{=}}

\newcommand{\der}{\vdash}
\newcommand*{\symdif}{\mathrel{\Delta}}
\newcommand*{\difset}{\mathrel{-}}

\newcommand{\true}{\mathtt{t}}
\newcommand{\false}{\mathtt{f}}
\newcommand{\Real}{\mathbb{R}}

\newcommand*{\orth}{^\perp}
\newcommand*{\tens}{\otimes}
\newcommand*{\lollipop}{\multimap}

\newcommand{\FormA}{F} 
\newcommand{\FormB}{G} 

\newcommand{\MLL}{\ensuremath{\mathsf{MLL}}\xspace}

\newcommand*{\posatom}[1]{{#1}^+}
\newcommand*{\negatom}[1]{{#1}^-}

\newcommand*{\rulenametext}[1]{\textnormal{\textit{#1}}}
\newcommand*{\ax}{\rulenametext{ax}}
\newcommand*{\cut}{\rulenametext{cut}}

\newcommand{\X}{X^+}

\newcommand{\Nm}[1]{\mathsf{Nm}(#1)}

\newcommand*{\boxof}[1]{\ensuremath{\bbox^{#1}}}

\newcommand*{\logsys}[1]{\textnormal{\textsf{#1}}}
\renewcommand*{\MLL}{\ensuremath{\logsys{MLL}}\xspace}

\newcommand*{\somepn}{\mathcal{R}}
\newcommand*{\somenode}{n}
\newcommand*{\G}{\mathcal{G}}

\newcommand{\netN}{\mathcal N}
\newcommand{\N}{\netN}
\newcommand{\Net}{\N}
\newcommand{\netR}{\mathcal R}
\newcommand{\R}{\netR}

\newcommand{\netE}{\mathcal E}
\newcommand{\E}{\netE}

\newcommand{\bbox}{\mathtt{b}}

\newcommand{\Boxes}[1]{\mathsf{Boxes}(#1)}

\newcommand{\pol}[1]{\mathit{Pol}({#1})}

\newcommand{\Red}[1][]{\xrightarrow{#1}}

\newcommand{\den}[1]{\llbracket{#1}\rrbracket}
\newcommand{\sem}[1]{\llparenthesis{#1}\rrparenthesis}
\newcommand*{\compfact}{\odot}

\newcommand{\nnReal}{\Real^+_0}
\newcommand*{\setofrv}[1]{\mathsf{Val}(#1)}

\newcommand{\Val}[1]{\mathsf{Val}(#1)}

\newcommand{\ftone}{\mathbf{1}}

\newcommand*{\somebayes}{\mathcal{B}}
\newcommand*{\bn}{\somebayes}

\newcommand{\bX}{\mathbf X}
\newcommand{\bY}{\mathbf Y}

\newcommand{\bx}{{\mathbf x}}
\newcommand{\by}{{\mathbf y}}

\newcommand{\be}{{\mathbf e}}

\newcommand{\tr}[1]{\mathrm{tr}\left(#1\right)}
\newcommand{\ptr}[1]{\mathrm{tr}_{#1}}

\renewcommand{\L}{\mathcal{L}}
\newcommand{\Pos}[1]{{\L(#1)}^+}
\newcommand{\D}{\mathcal{D}}

\newcommand{\Eop}{\mathcal E}

\newcommand*{\baseof}[1]{\textsf{Basis}_{#1}}

\newcommand*{\C}{\mathbb{C}}

\newcommand{\HH}{\mathcal{H}}

\newcommand{\qfactfunc}[1]{\widehat{#1}}

\newcommand{\M}{\rho}

\newcommand{\reg}{Q}
\newcommand{\qreg}{\reg}

\newcommand{\bQ}{\mathbf Q}
\newcommand*{\someqfact}{\phi}
\newcommand*{\someqfactset}{\Phi}

\newcommand*{\id}[1][]{\mathrm{Id}_{#1}}
\newcommand*{\qcup}[1][]{\mathrm{cup}_{#1}}
\newcommand*{\qcap}[1][]{\mathrm{cap}_{#1}}
\newcommand*{\cc}[1][]{\mathrm{c}_{#1}}
\newcommand*{\cw}[1][]{\mathrm{w}_{#1}}

\newcommand{\idop}{\id}

\newcommand*{\choiiso}{\mathcal{J}}

\newcommand{\rv}{r.v.\xspace}
\newcommand{\rvs}{r.v.s\xspace}

\newcommand*{\qfactor}{Q-factor}
\newcommand*{\Qfactor}{\qfactor}

\newcommand*{\QCPT}{Quantum CPT}

\newcommand*{\qCPT}{Q-\textsf{cpt}\xspace}
\newcommand*{\qCPTs}{Q-\textsf{cpt}s\xspace}

\newcommand*{\QBN}{quantum Bayesian network}
\newcommand*{\QBNs}{Quantum Bayesian networks\xspace}

\newcommand*{\qqbn}{\textsf{QBN}}

\newcommand*{\qpn}{quantum proof-net\xspace}
\newcommand*{\qpns}{quantum proof-nets\xspace}

\newcommand*{\Qpns}{Quantum proof-nets\xspace}

\newcommand*{\qbpn}{\textsf{qpn}\xspace}
\newcommand*{\qbpns}{\textsf{qpns}\xspace}

\newcommand*{\main}{output\xspace}

\newcommand*{\good}{closed\xspace}

\newcommand{\cpt}{CPT\xspace}

\newcommand{\pn}{proof-net\xspace}
\newcommand{\pns}{proof-nets\xspace}

\newcommand{\pts}{proof-trees\xspace}

%% file: 00_Introduction.tex
\section{Introduction}

Pearl's \emph{Bayesian networks}~\cite{Pearl86,Pearl_book} provide a framework for reasoning under conditions of uncertainty and partial knowledge, with a wide range of applications from statistics to epidemiology, economics and computer science.
Bayesian networks have a \emph{dual nature}, serving both as probabilistic graphical models for classical probabilistic reasoning and inference, and as causal models, precising the connections between observed data and causal relations.
When reasoning on \emph{quantum systems}, the classical framework is not general enough to account for entanglement and the non-local correlations observed in Bell experiments.
The development of quantum causal models (see \eg~\cite{BarrettLorenzOreshkov2019} and references therein) is an active research area across quantum information and the foundations of quantum theory, advancing along various axes, whose motivations span from foundational questions and non-locality, to enabling device-independent cryptographic protocols, to facilitating data-driven discovery.

In this paper, we focus on quantum Bayesian networks, a direct generalization of Pearl's networks introduced in foundational work by Henson, Lal, and Pusey~\cite{GBN2014}.
They provide a mathematical framework to describe causal relations, to analyze correlations, and to predict the probabilities of measurement outcomes, in systems involving both classical and quantum data.
The formalism builds on previous work by Leifer and Spekkens~\cite{LeiferSpekkens2013}, where the perspective is that of \emph{quantum theory as a theory of inference}.
Quantum theory is indeed fundamentally probabilistic at its core, as it is concerned with predicting the probabilities of measurement outcomes on a physical system.
In this sense, the prediction task can be framed as a problem of \emph{probabilistic inference over models involving both classical and quantum data}, as \Cref{ex:bell} illustrates.
Probabilistic inference then offers mathematical and logical tools to comprehend, predict, and control quantum phenomena, essential both to the theoretical understanding and to quantum technologies.

\begin{example}[Alice \& Bob: the Bell set-up]
\label{ex:bell}
The directed acyclic graph in \cref{fig:Bell} describes the well-known set-up for the Bell experiment.
Alice and Bob---who stand in widely separated laboratories---are each able to perform two possible measurements on a qubit (for example, measuring it with respect to two different bases). 
Their colleague Quentin prepares a pair of (possibly entangled) qubits, sending one to Alice and the other to Bob. When Alice receives her qubit $q_1$, she chooses to randomly perform one of the two possible measurements, by flipping a coin $X$.
When Bob receives his qubit $q_2$, he also performs a measurement, by flipping a coin $Y$.
The result of the experiment is
\begin{equation}\label{eq:bell}
	\Pr(a,b\mid x,y) =\Pr(a,b,x,y) /\Pr(x,y)
\end{equation}
\ie\ the probability that the (classical) outcomes of Alice and Bob measurements are respectively $a$ and $b$, given outcomes $x$ for $X$ and $y$ for $Y$.
\end{example}

\begin{figure}
	\begin{minipage}[b]{0.39\linewidth}
		\begin{adjustbox}{}
			\begin{tikzpicture}
				\begin{genscope}
					\node (a) at (-1.5,0) {$A$};
					\node (b) at (1.5,0) {$B$};
					\node (c) at (0,1) {$\qreg$};
					\node (x) at (-3,1) {$X$};
					\node (y) at (3,1) {$Y$};
					
					\path (y) edge (b);
					\path (x) edge (a);
					\path (c) edge node[named path]{$q_2$} (b);
					\path (c) edge node[named path]{$q_1$} (a);
				\end{genscope}
			\end{tikzpicture}
		\end{adjustbox}
		\caption{Bell set-up (from~\cite{GBN2014}).}
		\label{fig:Bell}
	\end{minipage}
	\hfill
	\begin{minipage}[b]{0.59\linewidth}
		\begin{adjustbox}{}
			\begin{tikzpicture}[baseline=(base)]
				\coordinate (base) at (0,0);
				\begin{scope}[every node/.style={rectangle,draw=none}, every path/.style={draw=black}]
					\node[draw=black,minimum width=2.5em] (bQ) at (1,0) {$\qreg$};
					\node[draw=black,minimum width=2.5em] (bX) at (3,0) {$X$};
					\node[draw=black,minimum width=2.5em] (bA) at (6,0) {$A$};
					\node[draw=black,minimum width=2.5em] (bB) at (8,0) {$B$};
					\node[draw=black,minimum width=2.5em] (bY) at (10,0) {$Y$};
					
					\node (cX) at (4.5,-1.5) {$\cut$};
					\node (cQ) at (4,-2.25) {$\cut$};
					\node (pQ) at (6.7,-1.5) {$\parr$};
					\node (cY) at (9.1,-1.5) {$\cut$};
					\path (3.3,-.25) |- node[named edge,pos=.5,above right]{$X^+$} (cX) -| node[named edge,pos=.5,above left]{$X^-$} (5.7,-.25);
					\path (1.3,-.25) |- node[named edge,pos=.5,above right]{$\qreg^+$} (cQ) -| node[named edge,pos=.5,above left]{$\qreg^-$} (pQ);
					\path[out=180,in=-90] (pQ) edge node[named edge,pos=.4,right]{$q_1^-$} (5.9,-.25);
					\path[out=0,in=-90] (pQ) edge node[named edge,pos=.4,left]{$q_2^-$} (7.7,-.25);
					\path (10.3,-.25) |- node[named edge,pos=.5,above left]{$Y^+$} (cY) -| node[named edge,pos=.5,above right]{$Y^-$} (7.9,-.25);
					\path (6.3,-.25) -- node[named edge,pos=.75]{$A^+$} ++(0,-.6);
					\path (8.3,-.25) -- node[named edge,pos=.75]{$B^+$} ++(0,-.6);
				\end{scope}
			\end{tikzpicture}
		\end{adjustbox}
		\caption{Bell set-up as a quantum proof-net.}
		\label{ex:bell_pn}
	\end{minipage}
\end{figure}

\subparagraph{Bayesian Networks.}

In the theory of Bayesian networks, the causal structure is encoded by a \emph{directed acyclic graph} (DAG) where nodes represent random variables and edges express conditional dependencies.
The strength of the dependencies (or the degree of knowledge) is quantified by conditional probability tables.
A key benefit of Bayesian networks is to provide a compact representation of large probability distributions, and efficient inference algorithms (both exact and approximate) to answer queries about the underlying distribution without explicitly constructing it in full.
Critically, the DAG includes both \emph{observable} variables of interest, and \emph{hidden} (unobserved) ones; in figures, we adopt the common convention of denoting classical hidden variables by shaded nodes. In the quantum setting---as in Bell's theorems---hidden variables play a central role, and they may correspond to quantum systems.

\subparagraph{Semantics and Inference.}

The semantics of a Bayesian network is the probability distribution it defines.
More accurately, what one seeks is the \emph{marginal distribution over the variables of interest}.
Exact inference computes it precisely---this involves two key operations:

{\hfill
	product (\emph{composing}) 
	+ summing-out (\emph{hiding}) irrelevant variables.
\hfill}

The formalization and theory of inference rely on a class of functions, known as factors (see \Cref{sec:BN}), which is an abstraction of conditional probability distributions.
Tractability and efficiency rely on their properties, and specifically on two key aspects: the product of factors \emph{inherently shares variables}, and product and sum \emph{distribute} under suitable conditions---pushing the sum on smaller components reduces the size of computations.

\subparagraph{Quantum Bayesian Networks, issues.}

Quantum Bayesian networks are still an emerging field, not as developed as their classical counterparts.
A crucial missing feature is the ability to compute the semantics of the model (the desired marginal distribution) through intermediate, partial computations, without ever computing the full joint distribution.
Put differently, what is lacking is \emph{compositionality}, which would enable computing a model's denotation as a function of its subparts, alongside modular reasoning.
A closely related question concerns \emph{modularity}: when can causal descriptions of systems as subparts be used to construct larger models?
A well-established tool to ensure modularity are \emph{types}, that specify precise contracts (\eg\ input/output behaviors) for the components of a system. 

\subparagraph{The goal of this paper.}

We have two main objectives in this paper.
\begin{itemize}
\item
To address the lack of compositionality and modularity in the setting of quantum Bayesian networks by introducing methods and concepts from denotational semantics and proof theory, thereby enabling compositional principles and a typing discipline.
\item
To develop a framework fully compatible with Bayesian networks and Bayesian inference, thus paving the way for the application of the techniques developed in that context.
\end{itemize}

\emph{Compositionality and modularity} facilitate reasoning about complex systems and their properties, ensuring that the meaning of the system can be derived systematically and in a principled way from the meanings of its parts, and allowing components to be analyzed and replaced independently.
Compositionality also enables modular reasoning about inference.

\emph{Types} serve as abstract characterizations of systems behaviors, constraining and guiding their formation.
Types discipline statically guarantees semantic properties such as termination, consistency, and compositional correctness.
Well-typed programs exhibit well-behaved execution and preserve quantitative (\eg\ probabilistic) invariants throughout their evaluation.

\emph{Encompassing the semantics of Bayesian networks.}
As seen in \cref{ex:bell} (the Bell set-up), even for systems involving quantum sources of causality, only the classical outcomes of measurements can be observed.
Thus, what a model ultimately defines is a probability distribution over classical variables---such as \cref{eq:bell}.
From the inference perspective, it is desirable to have a framework enabling (when relevant) the extensive set of inference techniques and algorithms developed for Bayesian networks.
To this end, our semantics integrates a key notion of inference algorithms, that of \emph{factor}, which we discuss in \cref{sec:BN}.

\subparagraph{Contributions and challenges.}

Our first contribution is to develop a \emph{compositional semantics}, which allows for the interpretation and modular combination of components.
We adapt Selinger's semantics in~\cite{Selinger2004} to take into account the factor-based approach of Bayesian networks.
The main technical challenge is to conciliate two very different behaviors:
\begin{itemize}
\item
\emph{Classical variables share their values, and they do so in an efficient way}: the mathematical setting underlying Bayesian networks integrates this feature in the very definition of factors product, and exploits it to obtain compact representations and efficient calculations.
\item
\emph{Quantum data cannot be shared}: a defining feature of quantum computing is the \emph{No-Cloning Theorem}, implying qubits cannot be duplicated nor broadcast to multiple receivers.
\end{itemize}

We satisfy both requirements by introducing \emph{quantum factors} in \cref{sec:semantic}.
The mathematical developments in that section are our main and most technical results. 
Remarkably, when all causes are classical, our framework exactly coincides with the standard \emph{factor-based} semantics of Bayesian networks, while in the purely quantum case it behaves like tensor networks.
In \cref{sec:qbn}, we rely on quantum factors to redefine quantum Bayesian networks.
Our formalism is equivalent to that in~\cite{GBN2014}, however our semantics enables a compositional interpretation by sub-components, unlike the original definition (see~\cite[page 12]{GBN2014}), as we discuss in \cref{sec:lack}.
A crucial aspect to achieve compositionality is that quantum factors are closed under both product and sum-out (\ie\ marginalization of unobserved variables).
Finally, in \cref{sec:qpns}, we explore a \emph{typed graphical formalism} based on proof-nets of linear logic, where types ensure well-behaved compositions of systems.
Our key result is that the formalism is \emph{sound and complete} \wrt\ quantum Bayesian networks: every quantum Bayesian network can be represented as a proof-net, and every closed proof-net corresponds to a quantum Bayesian network.

\paragraph*{Motivational examples: compositionality and modularity}
\label{sec:motivational}

Two examples can illustrate the desiderata and issues with compositionality and modularity.

\subparagraph{Compositionality.}

Consider the DAG in \Cref{fig:compositional}.
The nodes $X$ and $Y$ produce a classical output, while the nodes $Q_1$ and $Q_2$ have a quantum nature.
A natural question is:
\emph{Can we compute the semantics of the model in terms of sub-components}---as for example the highlighted sub-graphs? 
The approach in~\cite{GBN2014} does not adapt well to an interpretation by components because it has a global nature, as we will discuss in \Cref{sec:lack}.

\subparagraph{Do parts compose well?}

When considering components (possibly with open inputs\footnote{Bayesian networks with open inputs are called \emph{conditional Bayesian networks}, see \eg~\cite{KollerBook}.}), a natural question somehow dual to the previous one is whether independently defined components compose well.
Consider the three DAGs in \Cref{fig:modular}: $\N_0$ awaits an input $A$ and outputs $C$, $\N_1$ and $\N_2$ both await an input $C$ and output $A$ and $D$.
The graph obtained by plugging together $\N_0$ and $\N_1$ (matching inputs and outputs) is a DAG, while the graph that plugs together $\N_0$ and $\N_2$ has a directed cycle.
By moving to \emph{typed graphs}, we guarantee that composing graphs of compatible types produces a DAG (\cref{sec:PNcompositionality}).
Rather than defining yet-another-syntax, we encode quantum Bayesian networks into the graph syntax of linear logic: proof-nets.
This builds on a recent line of work connecting (classical) Bayesian networks with proof-nets~\cite{EhrhardFP23,popl24,BPN}.
A \pn is typed by a sequent in Multiplicative Linear Logic.
As we will see (\cref{sec:PNcompositionality} and \cref{fig:modularity_pn}), the DAGs in \cref{fig:modular} admit the following typing:
\begin{equation*}
\hfill \N_0\der A\lollipop C \hfill \N_1\der (A\lollipop C)\lollipop D \hfill \N_2\der C \lollipop (A \otimes D) \hfill
\end{equation*}
The DAGs $\N_0$ and $\N_1$ compose together, producing a DAG of output $D$, while $\N_2$ cannot be given any type that matches the one of $\N_0$.

\begin{figure}
\begin{minipage}[b]{0.35\linewidth}
\begin{adjustbox}{max height = 8em, center = \linewidth}
\begin{tikzpicture}[bg_ellipse/.style={ellipse,minimum height=1.8cm,inner sep=0pt}]
		\coordinate (T_Q1) at (0, 0);
		\coordinate (T_Y) at (2.5, 0);
		\coordinate (T_Q2) at (1.5, -1.5);
		\coordinate (T_X) at (-0.5, -3.0);
		
		\node[bg_ellipse, fill=red!10, rotate=80.5, minimum width=4.5cm] at ($(T_Q1)!0.5!(T_X)$) {};
		\node[bg_ellipse, fill=blue!10, rotate=56.3, minimum width=3.2cm] at ($(T_Y)!0.5!(T_Q2)$) {};
\begin{genscope}
		\node (Q1) at (T_Q1) {$\qreg_1$};
		\node[fill=gray!50] (Y) at (T_Y) {$Y$};
		\node (Q2) at (T_Q2) {$\qreg_2$};
		\node (X) at (T_X) {$X$};
		
		\path (Q1) edge (X);
		\path (Q1) edge (Q2);
		\path (Y) edge (Q2);
		\path (Q2) edge (X);
		\path (X) edge ++(0, -0.8);
\end{genscope}
\end{tikzpicture}
\end{adjustbox}
\caption{Compositionality.}
\label{fig:compositional}\label{fig:components}
\end{minipage}
\hfill
\begin{minipage}[b]{0.6\linewidth}
\begin{adjustbox}{max height = 8em, center = \linewidth}
\begin{tikzpicture}[baseline=(base)]
	\node (base) at (-1, 2.5) {$\N_0$};
\begin{genscope}
	\node[fill=gray!50] (B) at (0, 2) {$B$};
	\node (C) at (0, .5) {$C$};
	
	\path (0, 3.25) edge node[named edge,right]{$A$} (B);
	\path (B) edge (C);
	\path (C) edge node[named edge,right]{$C$} (0, -.5);
\end{genscope}
\end{tikzpicture}
\quad\vrule\quad
\begin{tikzpicture}[baseline=(base)]
	\node (base) at (-1.5, 3) {$\N_1$};
\begin{genscope}
	\node[fill=gray!50] (E) at (0, 3) {$E$};
	\node (D) at (-1.5, 1) {$D$};
	\node (A) at (1.5, 2) {$A$};
	
	\path (E) edge (D);
	\path (E) edge (A);
	\path ($(D)+(-0.8, 1.0)$) edge node[named edge,left]{$C$} (D);
	\path (D) edge node[named edge,left]{$D$} ($(D)+(-0.6, -1.0)$);
	\path (A) edge node[named edge,right]{$A$} ($(A)+(0.5, -1.5)$);
\end{genscope}
\end{tikzpicture}
\quad\vrule\quad
\begin{tikzpicture}[baseline=(base)]
	\node (base) at (-1.5, 3) {$\N_2$};
\begin{genscope}
	\node[fill=gray!50] (E) at (0, 3) {$E$};
	\node (D) at (-1.5, 1) {$D$};
	\node (A) at (1.5, 2) {$A$};
	
	\path (E) edge (D);
	\path (E) edge (A);
	\path ($(D)+(-0.8, 1.0)$) edge node[named edge,left]{$C$} (D);
	\path (D) edge node[named edge,left]{$D$} ($(D)+(-0.6, -1.0)$);
	\path (A) edge node[named edge,right]{$A$} ($(A)+(0.5, -1.5)$);
	\path[out=-50, in=-100] (D) edge (A);
\end{genscope}
\end{tikzpicture}
\end{adjustbox}
\caption{Modularity.}
\label{fig:modular}
\end{minipage}
\end{figure}

%% file: 01_Background.tex
\section{Preliminaries}
\label{sec:background}

We recall some basics on Bayesian networks and on quantum computation, respectively referring to~\cite{DarwicheBook} (or~\cite{DarwicheHandbook} for a compact introduction) and to~\cite{Nielsen_book,Watrous_book} for further reading.
We then present quantum Bayesian networks from~\cite{GBN2014}.

\subsection{Classical data and Bayesian Networks}
\label{sec:BN}

Bayesian networks (BNs) are probabilistic graphical models.
Bayesian models provide a formalism for reasoning under conditions of uncertainty or partial knowledge: given a system under study, how \emph{likely} is it that a particular feature is in a particular state?
Every feature of the system is represented by a \emph{random variable}.
For the purpose of modeling, each random variable can be seen as a \emph{name} for an atomic proposition (\eg ``Rain'') which assumes values from a set of states (\eg $\{\true,\false\}$).
The full system is modeled as \emph{a joint probability distribution} over the variables of interest; each element in the sample space represents a possible state.

\subparagraph{Random Variables (\rvs).}

Assuming a countable set of names $X,Y,\dots$, a \emph{random variable} (\rv) is the pair of a name $X$ and a finite set of values $\Val X$.
\emph{W.l.o.g.}, we will assume all random variables to be binary, with $\Val{X}=\{x^\true,x^\false\}$ for a name $X$.
We then silently identify a name $X$ with the \rv $(X, \Val X)$.

We adopt the standard convention of capital letters (\eg $X,Y$) denoting random variables, and lowercase letters (\eg $x,y$) for particular \emph{values} of those variables; $\Pr(x)$ stands for $\Pr(X=x)$.
A finite set of names $\bX=\{X_1, \dots, X_n\}$ defines a ``compound'' \rv where $\Val{\bX}$ is the Cartesian product $\Val{X_1} \times \dots \times \Val{X_n}$; we denote by $\bx$ a tuple in $\Val{\bX}$.

%

\subparagraph{Bayesian Networks.}

A Bayesian network over a set of \rvs $\bX$ is a DAG whose nodes are the \rvs\ and whose directed edges represent conditional dependencies.
It represents in a compact and factorized way a joint probability distribution $\Pr(\bX)$, by exploiting conditional (in)dependencies in the distribution.
This is achieved by associating with each node a \emph{function}---a conditional probability table---which quantifies the dependencies of each node from its parents.

Formally, a \textdef{Bayesian network} $\bn$ over the set of \rvs $\bX$ is a pair $(\G,\Phi)$ where:
\begin{itemize}
\item $\G$ is a directed acyclic graph (DAG), whose set of nodes is $\bX$;
\item $\Phi$ assigns to each $X\in\bX$ a \emph{conditional probability table (\cpt)} $\phi^X$, for $X$ given its parents.
\end{itemize}

\begin{theorem}[Semantics of BNs~\cite{Pearl86}]
\label{lem:BNsem}
A BN $\bn$ over the set of \rvs $\bX$ defines a unique probability distribution $\Pr(\bX)$ given by the \emph{product} of the CPTs associated to the nodes:
\begin{equation}
\label{eq:BNsem}
\tag{$\ast$}
\Pr(\bX) = \prod_{X\in \bX} \phi^X
\end{equation}
\end{theorem}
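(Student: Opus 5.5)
The claim has two parts. The product in \cref{eq:BNsem} is a probability distribution on $\Val{\bX}$, meaning it is nonnegative and sums to $1$. It is also the unique distribution for which each node is conditionally independent of its non-descendants given its parents, and whose conditionals are the given \cpt{}s. I would prove both by induction on the number of nodes, peeling off a \emph{sink} of the DAG $\G$ at each step. Such a node exists because $\G$ is finite and acyclic.

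\emph{Normalization.} Nonnegativity is immediate, since every $\phi^X$ is a table of conditional probabilities. For summation, fix a topological order $X_1,\dots,X_n$ of $\G$, so that $X_n$ is a sink. The variable $X_n$ occurs only in its own factor $\phi^{X_n}(x_n \mid \mathit{pa}(X_n))$, because it is nobody's parent. Summing over $x_n$ first therefore pushes the sum inside the product:
\[
\sum_{\bx}\prod_{i\le n}\phi^{X_i} \;=\; \sum_{x_1,\dots,x_{n-1}}\prod_{i<n}\phi^{X_i}\cdot\sum_{x_n}\phi^{X_n}(x_n\mid \mathit{pa}(X_n)) \;=\; \sum_{x_1,\dots,x_{n-1}}\prod_{i<n}\phi^{X_i},
\]
where the last equality holds because each \cpt sums to $1$ in its head variable. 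The right-hand side is the same expression for the BN obtained by deleting $X_n$, which is still a DAG with the same \cpt{}s on the remaining nodes. By induction the sum is $1$, with the empty product as base case. The same computation also shows that the marginal of $\Pr$ on $\{X_1,\dots,X_k\}$ is the product of the first $k$ factors, for every prefix of the topological order.

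\emph{Conditionals and Markov property.} From the prefix-marginal fact, $\Pr(x_k\mid x_1,\dots,x_{k-1})$ is the ratio of consecutive prefix products, which equals $\phi^{X_k}(x_k\mid \mathit{pa}(X_k))$ whenever the denominator is nonzero. This conditional depends only on the parents, so it yields the local Markov property with respect to the predecessors. Ancestral sets can always be placed first in some topological order, which extends this to all non-descendants.

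\emph{Uniqueness.} Let $P$ be any distribution satisfying the local Markov property for $\G$ with these conditionals. The chain rule along the topological order gives $P(\bx)=\prod_k P(x_k\mid x_1,\dots,x_{k-1})$. The predecessors of $X_k$ are non-descendants containing its parents, so each factor equals $P(x_k\mid \mathit{pa}(X_k))=\phi^{X_k}$. Hence $P$ is the product in \cref{eq:BNsem}.

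The main obstacle I expect is technical care rather than depth: handling zero-probability conditioning events in the chain-rule and uniqueness steps. I would deal with this by arguing directly on the product formula wherever a denominator vanishes, since both sides are then $0$.
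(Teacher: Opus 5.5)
The paper gives no proof of this theorem: it is quoted from \cite{Pearl86} as background. For its quantum analogue, \cref{lem:sem_qbn_dist}, the paper only says the proof goes ``in a similar way'', and in the appendix it derives that result from the correspondence with instrument-based networks (\cref{lem:bij_2_QBNs}). So there is no proof in the paper to compare against.

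Your proof is correct and is the standard one. The normalization step is variable elimination in reverse topological order: you use distributivity of sum-out over product (the classical case of item 4 of \cref{lem:qfactors_op}, since the sink is outside the scope of the other factors), followed by the CPT condition. The chain rule along a topological order then gives uniqueness. Together with your second step, this also proves both directions of the Markov-condition remark that follows the theorem.

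Two points deserve an extra line when you write it out. First, say explicitly that the set of non-descendants of $X$ is closed under parents and contains the parents of $X$. That is what allows it to be listed immediately before $X$ in a topological order. Second, handle null conditioning events by proving $P(x_1,\dots,x_k)=\prod_{i\le k}\phi^{X_i}$ by induction on $k$. If $P(x_1,\dots,x_{k-1})=0$, both sides vanish by the induction hypothesis. Otherwise the induced parent configuration of $X_k$ has positive probability, so the hypothesis ``with these conditionals'' actually applies there.
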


\begin{remark}[Markov condition]
Given a Bayesian network over $\bX$, each variable $X$ is assumed to be independent of its non-descendants, given its parents.
A probability distribution $\Pr(\bX)$ can be expressed as (\ref{eq:BNsem}) if and only if it satisfies such a Markov condition.
\end{remark}

The \emph{marginal distribution} over the \emph{variables of interest} $\bY \subseteq \bX$ is obtained by summing-out the irrelevant variables:
$\Pr(\bY)= \sum_{\bX\difset\bY} \Pr(\bX)$.
 
\subparagraph{From CPTs to Factors.}

The strength of BNs is to represent large probability distributions compactly, and to allow for inference algorithms that do not construct the full joint distribution explicitly.
CPTs alone are insufficient for this purpose, motivating the definition of a class of functions called \emph{factors}, a general mathematical abstraction representing both the initial parameters (the CPTs) and the partial results generated during the inference.
Formally, a \emph{factor} $\someqfact$ over the \rvs\ $\bX$ is a function mapping each $\bx\in\setofrv{\bX}$ to a non-negative real:
\begin{equation}
\label{eq:factor}
\someqfact:\setofrv{\bX}\to\nnReal
\end{equation}
A CPT $\theta$ for $X$ given $\bY$ is a factor over $X,\bY$ which satisfies $\sum_x \theta({x,\by}) = 1$ for all $\by\in\setofrv{\bY}$.

Unlike CPTs, factors are closed under both \emph{product and sum-out} operations.
This is crucial for algorithms (\eg\ Variable Elimination) that rely on pushing the sum (marginalization) to smaller components to reduce the cost of computations.
The so generated partial results are factors, but not CPTs.
Moving to factors is also necessary to incorporate evidence.

\subparagraph{Sharing variables.}

We refer to~\cite{DarwicheHandbook} for the formal definitions of sum and product of factors, which are the natural extensions of the corresponding operations for probabilities.
Tractability relies on the properties of these operations, and specifically on two key aspects: the product of factors \emph{inherently shares variables}, and product and sum \emph{distribute} under suitable conditions. Distributivity allows for pushing the sum on smaller components, reducing the size of computations.
Let us briefly discuss how sharing is built in the definition of product, and what this means. 
Assume we want to compute the distribution $\Pr(X,Y,Z)$ defined as the product $ \Pr(Z) \Pr(X \mid Z) \Pr(Y \mid Z)$.
This amounts to computing $2^3$ entries:
\begin{equation*}
\Pr(x,y,z) = \Pr(z) \Pr(x \mid z) \Pr(y \mid z) \quad \text{ for } x\in \setofrv{X}, y\in \setofrv{Y}, z\in \setofrv{Z}
\end{equation*}
Observe how the values corresponding to the same \rv\ $Z$ are shared: we compute $\Pr(z^\true) \Pr(x \mid z^\true) \Pr(y \mid z^\true)$ and $\Pr(z^\false) \Pr(x \mid z^\false) \Pr(y \mid z^\false)$, but never $\Pr(z^\true) \Pr(x \mid z^\true) \Pr(y \mid z^\false)$!

It is important to stress that the definition of the factors product is starkly different with that of the tensor product of matrices, which is the natural product for quantum systems.
We compare the two operations in \cref{app:section2}---the \emph{computational cost} being also starkly different (as discussed in~\cite[Example~9.7]{popl24}).

\subsection{Quantum data}

An (isolated) quantum system is associated with a complex vector space equipped with an inner product $\braket{\cdot | \cdot}$, \ie\ a \emph{Hilbert space} $\HH$ called the \definitive{state space} of the system.
The simplest quantum system is the \emph{qubit}, which has a two-dimensional state space corresponding to $\C^2$.

The \textbf{states} of a quantum system can be described in two mathematically equivalent ways, either as \emph{vectors} \emph{in} the space $\HH$ or as \emph{linear operators} acting \emph{on} the space $\HH$.
The latter formulation---standard in physics---shines in the description of quantum systems whose state is not known, as they allow to present a state as a probability distribution of states.

\begin{example}
\label{ex:preparation}
Intuitively, a state corresponds to a \emph{preparation}, \ie\ an action performed to set up an experiment.
Given two states $\rho_0$ and $\rho_1$, there must be a state $r\rho_0 + (1-r)\rho_1$ for any $r\in[0,1]$.
Indeed, a valid way to build a state from two preparations $\rho_0$ and $\rho_1$ is the preparation ``flip a fair coin: if heads, continue as the preparation $\rho_0$, otherwise continue as the preparation $\rho_1$'', yielding $\rho = \tfrac12\rho_0 + \tfrac12\rho_1$.
A probability distribution of states (such as $\rho$) also arises naturally to express \emph{partial knowledge} (or \emph{degrees of belief}) about the state of a system.
For example, $\rho$ above expresses we do not know if the state of the system is $\rho_0$ or $\rho_1$.
\end{example}

Before giving further details, we need to recall some notions from linear algebra.

\subparagraph{Hilbert Spaces.}

A finite $n$-dimensional Hilbert space $\HH$ is a complex vector space (isomorphic to) $\C^n$ with an inner product $\braket{\cdot |\cdot}$.
Column vectors are written $\ket{\psi}$, with $\psi$ a label, and their conjugate-transpose $\bra{\psi} \defeq \ket{\psi}^\dagger$. 	
\Cref{fig:algebra} (from~\cite{Nielsen_book}) summarizes relevant notations.

\begin{table}
\begin{tabular}{|>{\centering$}m{3cm}<{$}|m{10cm}|}
\hline
\textbf{Notation} & \textbf{Description}
\\
\hline
z^* & Complex conjugate of the complex number $z$.
\\
\hline
\ket{\psi} & Vector. Also known as a \textit{ket}.
\\
\hline
\bra{\psi} & Vector dual to $\ket{\psi}$. Also known as a \textit{bra}.
\\
\hline
\braket{\varphi | \psi} & Inner product between the vectors $\ket{\varphi}$ and $\ket{\psi}$.
\\
\hline
\ket{\varphi} \otimes \ket{\psi} & Tensor product of $\ket{\varphi}$ and $\ket{\psi}$. 
\\
\hline
\ket{\varphi \psi} & Abbreviated notation for $\ket{\varphi} \otimes \ket{\psi}$.
\\
\hline
\M^T & Transpose of the matrix $\M$.
\\
\hline
\M^{\dagger} & Adjoint (\aka\ Hermitian conjugate, conjugate-transpose) of $\M$:

$
\left[\begin{smallmatrix}
a\phantom{^*} & b \\ c\phantom{^*} & d
\end{smallmatrix}\right]^{\dagger}
=
\left[\begin{smallmatrix}
a^* & c^* \\ b^* & d^*
\end{smallmatrix}\right]
$.
\vspace*{.2em}
\\
\hline
\braket{\varphi | \M | \psi} & Inner product between $\ket{\varphi}$ and $\M \ket{\psi}$.

Equivalently, inner product between $\M^{\dagger} \ket{\varphi}$ and $\ket{\psi}$.
\\
\hline
\end{tabular}
\caption{Main linear algebra notations.}
\label{fig:algebra}
\end{table}

Given a Hilbert space $\HH$, $\L(\HH)$ denotes the set of \definitive{linear operators} \emph{on} $\HH$ (\ie\ from $\HH$ to $\HH$), which we 
identify with their matrix representations.
The \definitive{adjoint} $\M^\dagger$ of a matrix $\M$ is its conjugate-transpose.
We denote by $\id[\L(\HH)]$ the identity matrix in $\L(\HH)$.
The \definitive{trace} of a square matrix $\M = (\M_{ij})_{ij}\in \C^{n\times n}$ is $\tr{\M} = \sum_i \M_{ii}$.

A relevant subclass of $\L(\HH)$ is the set of positive operators $\Pos{\HH}$, playing the role of ``non‑negative real numbers''.
A linear operator $\M$ is \textbf{positive} 
if $\braket{\psi | \M | \psi} \in \nnReal$ for any $\ket \psi \in \HH$.

\begin{remark}
\label{rem:pos}
In the 1-dimensional space $\C^1$, the set of positive matrices in $\L(\C)=\C^{1\times 1}=\C$ is $\Pos{\C}=\nnReal$.
\end{remark}

\subparagraph{Quantum states as vectors in $\HH$.}

A common way to represent the states of a quantum system is by unit vectors in $\HH$, \ie\ vectors $\ket{\psi}$ such that $\braket{\psi | \psi} = 1$.
Considering a qubit, the vectors $\ket{0} = \left[\begin{smallmatrix}1\\0\end{smallmatrix}\right]$ and $\ket{1} = \left[\begin{smallmatrix}0\\1\end{smallmatrix}\right]$ form an orthonormal basis for its state space $\C^2$: an arbitrary state vector can be written $\ket{\psi} = \alpha \ket{0} + \beta \ket{1}$.
Intuitively, the states $\ket{0}$ and $\ket{1}$ of a qubit are analogous to the two values $0$ and $1$ which a \emph{bit} may take.
The way a qubit differs is that \emph{superpositions} (\ie\ linear combinations) of these two states also exist.
Some important states are $\ket{+} = \frac{1}{\sqrt{2}}(\ket{0} + \ket{1})$ and $\ket{-} = \frac{1}{\sqrt{2}}(\ket{0} - \ket{1})$, which form the Hadamard basis.

\subparagraph{Quantum states as density operators acting on $\HH$.}

Another usual way in physics to describe the state of a quantum system is by a \emph{positive operator $\rho$ with trace $1$}---called a \definitive{density operator}---that \emph{acts} on the state space $\HH$ of the system.
We denote by $\D(\HH) \subset \Pos{\HH}$ the set of density operators acting on $\HH$.
A vector $\ket{\psi}$ is represented by the matrix $\ket{\psi}\bra{\psi}$, called a \emph{pure state}.
If a quantum system is in the state $\rho_{i}$ with probability $p_i$, the density operator for the system is $\sum_{i} p_{i} \rho_{i}$.
This formalizes \Cref{ex:preparation}: if we can prepare qubits in pure states, we can also prepare qubits in any probability distribution of states.

\subparagraph{Composite systems.}

The \emph{state space} of a composite physical system is the tensor product of the state spaces of its components.
Given several systems, with the states $(\rho_i)_{i\in I}$, the \emph{state} of the composite system is $\bigotimes_{i\in I}\rho_i$.
Consider the Hilbert spaces $\HH_1$ with $\{\ket{\psi_i}\}_{i\in I}$ one of its bases, and $\HH_2$ with $\{\ket{\phi_j}\}_{j\in J}$ one of its bases.
Their tensor product $\HH_1 \otimes \HH_2$ is the Hilbert space with basis $\{\ket{\psi_i} \otimes \ket{\phi_j}\}_{(i,j)\in I\times J}$.
A short notation for $\ket{\psi}\otimes\ket{\phi}$ is $\ket{\psi\phi}$.
A pure quantum state in $\HH_1 \otimes \HH_2$ is \emph{separable} when it can be expressed as the tensor product of two vectors of $\HH_1$ and $\HH_2$.
Otherwise, it is \emph{entangled}, as for example the Bell state $\frac{1}{\sqrt{2}}(\ket{00} + \ket{11})$.

The \definitive{partial trace} $\ptr{\HH_1}$ of an operator in $\L(\HH_1\otimes\HH_2)$ is an operator in $\L(\HH_2)$ defined as $\ptr{\HH_1}\left(\M\otimes\M'\right)=\tr{\M}\M'$ for $\M\in\L(\HH_1)$ and $\M'\in\L(\HH_2)$, and extended to the general case by linearity.
The partial trace allows to ``\emph{forget}'' or ``\emph{discard}'' part of a composite system.

\subparagraph{Quantum operations (\aka\ quantum channels).}

States $\M\in\D(\HH)$ transform as $\M' = \E(\M)$, where the map $\E: {\D}(\mathcal{H}) \to {\D}(\mathcal{H'})$, called \definitive{quantum operation} or \definitive{quantum channel}, is:
\begin{enumerate}
\item\label{item:qo_cp}
\emph{Completely Positive}:
$\E$ sends positive operators to positive operators, and do so also when acting on a sub-part of a larger, potentially \emph{entangled}, composite system; \ie\ any ``extension'' $\idop\otimes \Eop$ (with $\idop$ the identity) \emph{carries a positive operator to a positive one}.
\item\label{item:qo_tp}
\emph{Trace-Preserving}:
if $\tr{\rho}=1$ then $\tr{\E(\rho)}=1$.
\item\label{item:qo_l}
\emph{Convex-Linear}:
$\E$ acts on a mixture of states by acting on each component individually, \ie\ $\Eop\left(\sum_{i\in I}p_{i}\M_{i}\right) = \sum_{i\in I}p_{i}\Eop(\rho_{i})$.
\end{enumerate}
By \ref{item:qo_cp} and \ref{item:qo_tp}, $\Eop$ (or more generally $\idop\otimes\Eop$) sends a density operator to a density operator.

\begin{example}
Two fundamental ways to modify a quantum state are \emph{unitary transformations}, describing the evolution of an isolated system, and \emph{measurements}.
A matrix $U\in\L(\HH)$ is unitary when $UU^\dagger = U^\dagger U = \id$.
The corresponding quantum channel is $\E:\rho \mapsto U \rho U^\dagger$.
Similarly, measurement operators $M_m$ act on a state $\rho$ as $M_m \rho M_m^\dagger$.
One more example of quantum operation is the partial trace.
\end{example}

\subparagraph{Quantum Instruments.}

Quantum channels describe the evolution of quantum states, \emph{ignoring classical information}---such as measurement outcomes---which plays a central role in ``real'' (or thought) experiments, as in \Cref{fig:Bell}.
A generalization of channels are \emph{quantum instruments}, mathematical devices that model ``real'' devices by specifying a post-measurement state per classical outcome.
This provides a framework where measurements are operations producing both quantum state updates and classical data.
A \textbf{quantum instrument} is an indexed collection $\{\E_a\}_a$ of completely positive linear maps that sum to a trace preserving map (\ie\ to a quantum channel).
Intuitively, the labels $a$ are the \emph{classical} outcomes ($a^\true/a^\false$) of a measurement---think $a\in \setofrv{A}$.
An instrument $\{\mathcal{E}_{a}\}_{a}$ induces a quantum channel $\mathcal{E}(\rho) =\sum_{a} \mathcal{E}_{a}(\rho)$, with $\Pr(a \mid \rho) = \tr{\mathcal{\E}_{a}(\rho)}$ providing the outcome probabilities.
We refer \eg\ to~\cite[page~112]{Watrous_book} for details.

\begin{remark}[Pure quantum and pure classical]
\label{rem:qi_pr}
The two limit cases yield standard notions.
\begin{itemize}
\item
Quantum instruments with no classical outcome (\ie\ with a single label $a$) are simply quantum operations.
\item
Quantum instruments from $\C$ to $\C$ are (isomorphic to) probability distributions.
\end{itemize}
\end{remark}

\subsection{Quantum Bayesian Networks (Instrument-based)}
\label{sec:qbn_instrument}

The generalization of Bayesian networks in~\cite{GBN2014} is based on the notion of quantum instruments. In \Cref{sec:qbn}, we will provide a definition of quantum Bayesian Network in terms of \qfactor s---the original definition (to which ours is equivalent) is not necessary to the reader. 

In this background section, we briefly review the key ideas in~\cite{GBN2014}, and what are the issues with the semantics there. 
The causal structure is described by a DAG (as in \cref{fig:Bell}); to include quantum sources of causality, the set of nodes is extended with new unobserved nodes, corresponding to quantum systems.
To each node of the generalized DAG is associated a \emph{family of quantum instruments} rather than a CPT, as we illustrate in \cref{ex:Alice_instrument}.

\begin{remark}
Recall that to each BN node is associated a CPT, \ie\ a \emph{family of probability distributions} indexed by values.
Intuitively, quantum instruments generalize probability distributions, so a \emph{family of quantum instruments} indexed by values generalizes a CPT.
\end{remark}

\begin{example}[Alice's instruments]
\label{ex:Alice_instrument}
Referring to \Cref{fig:Bell}, we want to model that Alice (the node $A$) performs the following operations, depending on the outcome of $X$.
Given $x^t$, Alice measures the qubit $q_1$ on the $(\ket{0},\ket{1})$ basis, returning $a^t$ if she obtains $\ket{0}$ and $a^f$ if not.
Given $x^f$, Alice measures $q_1$ on the $(\ket{-},\ket{+})$ basis, returning $a^t$ if she obtains $\ket{+}$, $a^f$ if not.

To model this setting, we associate to $A$ a \emph{family} of instruments, indexed by $\Val{X}$: $\{\{\Eop_{a^t,x^t},\Eop_{a^f,x^t}\},\{\Eop_{a^f,x^f},\Eop_{a^f,x^f}\}\}$.
The instrument given $x^t$ is $\{\Eop_{a^t,x^t},\Eop_{a^f,x^t}\}$ (from $\L(\HH_{q_1})$ to $\C$) with $\Eop_{a^t,x^t} : \M \mapsto \ket{0}\bra{0}\M\ket{0}\bra{0}$ and $\Eop_{a^f,x^t}: \M \mapsto \ket{1}\bra{1}\M\ket{1}\bra{1}$.
The instrument given $x^f$ is $\{\Eop_{a^f,x^f},\Eop_{a^f,x^f}\}$, 
with $\Eop_{a^t,x^f} : \M \mapsto \ket{+}\bra{+}\M\ket{+}\bra{+}$ and $\Eop_{a^f,x^f}: \M \mapsto \ket{-}\bra{-}\M\ket{-}\bra{-}$.
\end{example}


\subparagraph{Semantics of \QBN s.} 

The interpretation of a \QBN\ $\somebayes$ over a set of classical variables $\bX$ and a set of quantum systems $\bQ$ is defined in~\cite{GBN2014} as a suitable product of the instruments associated to the nodes.
Similarly to BNs, the semantics $\sem{\somebayes}$ of $\somebayes$ is a probability distribution over $\bX$, reflecting the fact that quantum states cannot be observed directly---we can only observe the classical outcomes of measurements.
This is exactly what happens in the Bell set-up (\Cref{ex:bell}), where we compute $\Pr(a,b\mid x,y)$. 

\subparagraph{Lack of compositionality.}
\label{sec:lack}

While intuitive and natural, the approach via quantum instruments has a limit in \emph{the lack of compositionality}.
Indeed, the interpretation of a \QBN\ $\somebayes$ in~\cite{GBN2014} needs to follow a specific order (see ~\cite[page~12]{GBN2014}).
The definition of an instrument does not adapt well to interpreting arbitrary sub-components (as in \cref{fig:compositional}).
Given a partition of $\somebayes$ into $\somebayes_1$ and $\somebayes_2$, even if we can define $\sem{\somebayes_1}$ and $\sem{\somebayes_2}$, we do not have a natural composing function yielding $\sem{\somebayes_1} \bullet \sem{\somebayes_2}=\sem{\somebayes}$.
There are two distinct issues.
\begin{enumerate}
\item
Consider \Cref{fig:compositional} with its \textcolor{red}{left} and \textcolor{blue}{right} sub-graphs.
The \textcolor{red}{left} sub-graph should be interpreted by an instrument $\{\Eop^l_x \mid x \in\setofrv{X}\}$ with $\Eop^l_x : \HH_2\to \HH_1$, and the \textcolor{blue}{right} sub-graph by $\Eop^r : \HH_1\to \HH_2$.
The interpretation of the full graph is an instrument $\{\Eop_x \mid x \in\setofrv{X}\}$ with $\Eop_x:\C\to\C$. 
However, the operations on instruments do not immediately yield from $\Eop^l_x$ and $\Eop^r$ a function of type $\C\to\C$.
\item
A further issue is the impossibility (in general) of dealing by components with hidden variables, exactly as happens for CPTs, which are closed by product but not by sum.
\end{enumerate}

\begin{remark}[Marginals and hidden variables]
\label{rem:marginals}
A technical but crucial point when considering compositionality is the variables of interest---or dually, the hidden variables.
In \cref{fig:compositional}, the shaded node $Y$ corresponds to a hidden (or unobserved) variable.
What we want here is the marginal $\Pr(x)$, which can be obtained from $\Pr(x,y)$ by summing-out the irrelevant variable: $\sum_{y \in \setofrv{Y}} \Pr(x,y)$.
When computing the semantics of the \textcolor{blue}{right} component, it would be desirable to already sum-out $Y$.
This kind of operation is regularly performed in Bayesian inference algorithms to reduce the size of computations.
\end{remark}

Addressing both issues with compositionality motivates the framework of \emph{quantum factors} that we introduce in the next section.
In the classical setting, abstracting CPTs into \emph{factors} gives a status to partial calculations. We will follow a similar path.

%% file: 02_QuantumDistributions.tex
\section{Quantum Factors}
\label{sec:semantic}

A \QBN\ is a DAG with nodes \emph{a set $\bX$ of \rvs} and a \emph{set $\bQ$ of quantum systems}.
We associate to each node a \emph{function}---called quantum factor---from $\Val{\bX}$ to \emph{positive operators} on the relevant Hilbert space.
We then interpret in the same way \QBN s, as well as any sub-component.
In this section we define quantum factors (\Qfactor s).
In \Cref{sec:qbn}, we reformulate the definition of \QBN s in this setting, and explain the equivalence with~\cite{GBN2014}.
Let us illustrate the idea with some examples.

\begin{example}
Consider \Cref{ex:preparation}, the preparation of a single-qubit quantum system $\qreg$ depending on a classical \rv\ $X$ (\eg\ a fair coin).
This is described by the following DAG:
\begin{tikzpicture}[scale=0.7,baseline=(b)]
\begin{genscope}
	\coordinate (b) at (0,-.2);
	\node[minimum size=5mm] (X1) at (0,0) {$X$};
	\node[minimum size=5mm] (Q1) at (1.4,0) {$\qreg$};
	\path (X1) edge (Q1);
\end{genscope}
\end{tikzpicture}.
The interpretation of this system is a function $\phi: \Val{X} \to \Pos{\HH_Q}$ giving for each $x \in \Val{X}$ a positive matrix $\M_{x}$ on $\HH_\qreg$.
If we are interested only in the state of the qubit, we can sum-out $X$ to obtain the density matrix $\Pr(x^t) \M_{x^t}+ \Pr(x^f) \M_{x^f}$, as in \Cref{ex:preparation}.

As another example, the preparation of a system $\qreg$ depending on two classical \rvs\ $X$ and $Y$ is a function from $\Val{X,Y}$ to $\Pos{\HH_\qreg}$, yielding four positive matrices $\M_{xy}\in\Pos{\HH_\qreg}$.
\end{example}

\begin{example}[Alice \& Bob]
Consider the DAG in \Cref{fig:Bell}, with four classical variables $A$, $B$, $X$ and $Y$, and one quantum node $\qreg$ producing two entangled qubits $q_1$ and $q_2$ in the state space $\HH_{q_1}\tens\HH_{q_2}$.
We associate to each node a function as follows.
\begin{itemize}
\item
To $X$ a function $\phi^X: \setofrv{X}\to\Pos{\HH_{\emptyset}}=\nnReal$ (recall \Cref{rem:pos}).
Similarly for $Y$.
\item
To $\qreg$ a positive matrix in $\Pos{\HH_{q_1}\tens\HH_{q_2}}$.
\item
To $A$ a function $\phi^A:\setofrv{X}\times\setofrv{A} \to \Pos{\HH_{q_1}}$.
\end{itemize}
\end{example}


\subparagraph*{Quantum registers.}

For simplicity's sake, from now on we assume quantum systems to be composed of qubits.
Assuming a countable set $\mathbb{Q}$ of qubits $q_0,q_1,\dots$ with associated 2-dimensional Hilbert spaces $\HH_{q_0},\HH_{q_1},\dots$, a \textdef{quantum register} is a \emph{finite} set of qubits $\qreg \subseteq \mathbb{Q}$.
Given a quantum register $\qreg$ of $n$ qubits, its associated Hilbert space is $\HH_{\qreg} = \bigotimes_{q \in \qreg} \HH_{q} \cong \C^{2^n}$.
We write $\qreg \uplus \qreg'$ the \emph{disjoint union} of the registers $\qreg$ and $\qreg'$, only defined when $\qreg \cap \qreg' = \emptyset$.

\subparagraph*{Quantum Factors.}

Please compare (\ref{eq:qfactor}) below with (\ref{eq:factor}) the definition of a factor in \Cref{sec:BN}.

\begin{definition}[Quantum Factor (\qfactor)]
Let $\bX$ be a set of random variables and $\qreg$ a quantum register.
A \definitive{quantum factor} (shortened as \definitive{\qfactor}) $\phi$ over $(\bX,\qreg)$ is a function
\begin{equation}\label{eq:qfactor}
\someqfact: \setofrv{\bX} \to \Pos{\HH_\qreg}
\end{equation}
mapping each tuple of values $\bx$ to a \emph{positive} operator in $\L(\HH_\qreg)$.
The \definitive{scope} of $\someqfact$ is $\bX\cup\qreg$.
\end{definition}

\begin{remark}[Pure classical and pure quantum]
The two limit cases yield familiar notions.
\begin{itemize}
\item
If $\qreg$ is empty, then positive operators in $\L(\C^1)$ are exactly elements of $\nnReal$, yielding the standard definition of a factor over a set of random variables.
\item
If $\bX$ is empty, then a \qfactor\ simply corresponds to \emph{a positive operator} in $\Pos{\HH_\qreg}$.
\end{itemize}
\end{remark}

The trivial \qfactor\ over $(\emptyset,\emptyset)$ simply corresponds to a \emph{non-negative} real.
We denote by $\ftone$ the trivial \qfactor\ corresponding to $1$.

\subsection{Product and Sum of Quantum Factors}

We show that \qfactor s admit and are \emph{closed} under \emph{product} $\odot$ and \emph{summing-out} $\sum$ operations.
Remarkably, \qfactor s encompasses both tensor networks and factors from Bayesian networks:
\begin{itemize}
\item
For \qfactor s $\someqfact$ and $\someqfact'$ over classical variables only, $\someqfact \compfact \someqfact'$ is the product of factors and $\sum_X \someqfact$ is summing-out (\aka\ \emph{marginalization}).
\item
For \qfactor s $\someqfact$ and $\someqfact'$ over quantum registers only, $\someqfact \compfact \someqfact'$ is the product of tensor networks 
and $\sum_\qreg \someqfact$ is tracing-out (\ie\ taking the \emph{partial trace} over $\HH_\qreg$).
\end{itemize}

\subparagraph*{Sum-out for \qfactor s.}
\emph{Summing out a variable} from a \qfactor\ corresponds to \emph{marginalization} for classical variables, and to the \emph{partial trace} for quantum systems.
\begin{definition}[Sum-out]
\label{def:sum}
Let $\someqfact$ be a \qfactor\ over $(\bX,\qreg)$. 
\begin{itemize}
\item
For $Z \in \bX$, the \definitive{sum $\sum_{Z} \someqfact $} is the \qfactor\ over $(\bX \difset\{Z\},\qreg )$ defined by:
\begin{equation*}
\left(\sum_Z \someqfact\right) (\by) \defeq \sum_{z\in\setofrv{Z}} \someqfact (\by,z)
\qquad
\text{for $\by \in \setofrv{\bX \difset\{Z\}}$}
\end{equation*}
\item
For $\qreg' \subseteq \qreg $, the \definitive{sum $\sum_{\qreg'} \someqfact $} is the \qfactor\ over $(\bX ,\qreg \difset\qreg')$ defined by:
\begin{equation*}
\left(\sum_{\qreg'} \someqfact\right) (\bx) \defeq \ptr{\HH_{\qreg'}}\left(\someqfact (\bx)\right)
\qquad
\text{for $\bx \in \setofrv{\bX}$}
\end{equation*}
\end{itemize}
\end{definition}

The definition of the product is more delicate, because classical variables and quantum systems behave differently: classical values can be ``shared'', while quantum states are linear resources.
To mathematically conciliate the two, we describe the product in terms of bases.

\subparagraph{Describing \qfactor s in terms of the canonical basis.}

For a qubit $q_i$, we write $\baseof{q_i} = (e^{00}_i,e^{01}_i,e^{10}_i,e^{11}_i)$ for the canonical orthonormal basis of $\L(\HH_{q_i}) \cong \C^{2\times 2}$, \ie\ $e^{00}_i = \ket{0}\bra{0}$, $e^{01}_i = \ket{0}\bra{1}$, $e^{10}_i = \ket{1}\bra{0}$, and $e^{11}_i = \ket{1}\bra{1}$.
Given a quantum register $\qreg = \{q_1, \dots, q_m\}$, we write $\baseof{\qreg} = \prod_{i=1}^m\baseof{q_i}$, which is an orthonormal basis for $\L(\HH_\qreg) = \L(\bigotimes_{i=1}^m \HH_{q_i})$.
A \qfactor\ $\someqfact$ over $(\bX,\qreg)$ is then associated to the following function, that we denote by $\qfactfunc{\someqfact}$:
\begin{align*}
\qfactfunc{\someqfact}: \setofrv{\bX} \times \baseof{\qreg} &\to \C\\
(x_1,\dots,x_n,e_1,\dots,e_m)&\mapsto \someqfact(x_1,\dots,x_n)_{e_1,\dots,e_m}
\end{align*}
where $\someqfact(x_1,\dots,x_n)_{e_1,\dots,e_m}$ is the scalar at position $(e_1,\dots,e_m)$ of the matrix $\someqfact(x_1,\dots,x_n)$.
Notice that $\qfactfunc{\someqfact}$ uniquely determines $\someqfact$, and vice-versa.

\begin{example}
Consider \rvs\ $X_1$ and $X_2$, and a quantum register $\{q_1,q_2\}$.
An example of \qfactor\ $\someqfact$ over $(\{X_1,X_2\},\{q_1,q_2\})$ is:
\begin{equation*}
\someqfact(x_1^t,x_2^t)
=
\someqfact(x_1^t,x_2^f)
=
\someqfact(x_1^f,x_2^t)
=
\left[\begin{smallmatrix}
	1 & 0 & 0 & 0 \\
	0 & 1 & 0 & 0 \\
	0 & 0 & 1 & 0 \\
	0 & 0 & 0 & 1 \\
\end{smallmatrix}\right]
\qquad
\someqfact(x_1^f,x_2^f)
=
\left[\begin{smallmatrix}
	2 & 0 & 0 & 1 \\
	0 & 2 & 0 & 0 \\
	0 & 0 & 2 & -1 \\
	1 & 0 & -1 & 2 \\
\end{smallmatrix}\right]
\end{equation*}
We also see $\someqfact$ as a function $\qfactfunc{\someqfact}:\setofrv{X_1}\times\setofrv{X_2}\times\baseof{q_1}\times\baseof{q_2} \to \C$ sending $(x_1,x_2,e_1,e_2)$ to the scalar corresponding to $e_1\tens e_2$.
For instance:
\begin{itemize}
\item
$\qfactfunc{\someqfact}(x_1^t,x_2^t,e^{00}_1,e^{00}_2) = \hphantom{-}1$,
the entry of $\someqfact(x_1^t,x_2^t)$ corresponding to $e^{00}_1 \tens e^{00}_2$;
\item
$\qfactfunc{\someqfact}(x_1^f,x_2^f,e^{11}_1,e^{01}_2) = -1$,
the entry of $\someqfact(x_1^f,x_2^f)$ corresponding to $e^{11}_1 \tens e^{01}_2$.
\end{itemize}
\end{example}

\subparagraph*{Product of \qfactor s.}
We now define the product.
It shares the values of classical variables (exactly as happens in the factors product, as discussed in \Cref{sec:BN}), while on quantum registers it behaves as the contraction of tensor networks. 
Below, we use $\qfactfunc{\someqfact}$ instead of $\someqfact$; the symmetric difference $S \symdif R \defeq (S \cup R) \difset (S \cap R)$ replaces the union in the quantum case.

\begin{definition}[Product of \qfactor s]
\label{def:prod}
Let $\someqfact_1$ and $\someqfact_2$ be \qfactor s respectively over $(\bX_1,\qreg_1)$ and $(\bX_2,\qreg_2)$.
The \definitive{product $\someqfact_1 \compfact \someqfact_2$} is a \qfactor\ over $(\bX_1\cup\bX_2,\qreg_1\symdif\qreg_2)$ defined as follows.
Let $\bx\in\setofrv{\bX_1\cup\bX_2}$ and $(\be_1,\be_2)\in\baseof{\qreg_1\symdif\qreg_2}=\baseof{\qreg_1\difset\qreg_2}\times\baseof{\qreg_2\difset\qreg_1}$.
Call $\bx_1$ the elements of $\bx$ belonging to $\setofrv{\bX_1}$, and similarly $\bx_2$ those in $\setofrv{\bX_2}$.
We set
\begin{equation*}
\qfactfunc{(\someqfact_1 \compfact \someqfact_2)} (\bx,\be_1,\be_2)
\defeq
\sum_{\be_3 \in \baseof{\qreg_1\cap\qreg_2}} \qfactfunc{\someqfact_1}(\bx_1,\be_1,\be_3)\ \qfactfunc{\someqfact_2}(\bx_2,\be_2,\be_3)
\end{equation*}
\end{definition}

\subparagraph*{Properties of sum and product.}
\qfactor s are \emph{closed under sum and product}.
Easy calculations show that the product is (essentially) associative, product and sum are both commutative, and---crucially---they distribute under suitable conditions, like classical factors.

\begin{restatable}{proposition}{qfactors}
\label{lem:qfactors_op}
\qfactor s are \emph{closed} under both product and sum.
Moreover:
\begin{enumerate}
\item\label{item:qfactors_op:1}
$(\someqfact_1\compfact \someqfact_2) \compfact \someqfact_3 = \someqfact_1 \compfact (\someqfact_2\compfact \someqfact_3)$ provided no qubit appears in the scope of all three \qfactor s
\item
$\someqfact_1 \compfact \someqfact_2 = \someqfact_2 \compfact \someqfact_1$
\item
$\sum_U \sum_V \someqfact_1 = \sum_V \sum_U \someqfact_1$
\item
$\sum_V(\someqfact_1 \compfact \someqfact_2) = (\sum_V \someqfact_1) \compfact \someqfact_2$ when $V$ is not in the scope of $\someqfact_2$
\end{enumerate}
\end{restatable}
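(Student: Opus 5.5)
The real content is closure: that the product $\someqfact_1\compfact\someqfact_2$ is again a \qfactor, i.e.\ that it takes positive values; the four identities are routine index manipulations. I would first reformulate \Cref{def:prod} in operator language. Fix $\bx$ and set $A\defeq\someqfact_1(\bx_1)\in\Pos{\HH_{\qreg_1}}$, $B\defeq\someqfact_2(\bx_2)\in\Pos{\HH_{\qreg_2}}$. Write $\qreg_1=R_1\uplus S$ and $\qreg_2=R_2\uplus S$ with $S=\qreg_1\cap\qreg_2$. By bilinearity, the sum over $\be_3\in\baseof{S}$ contracts the $S$-indices of $A$ against those of $B$. Reading $e^{ij}$ as the matrix coordinate $(i,j)$, this contraction is
\[(\someqfact_1\compfact\someqfact_2)(\bx)=\ptr{\HH_S}\bigl((A\tens\id[\HH_{R_2}])\,(\id[\HH_{R_1}]\tens B^{T_S})\bigr),\]
where $T_S$ is the partial transpose on $S$. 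Equivalently, it is the link product of $A$ and $B$ along $S$. I would verify this carefully by expanding both sides in the canonical bases.

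\textbf{Positivity.} I would prove it via the Choi--Jamiołkowski isomorphism. The map $B\mapsto B^{T_S}$ is not positive, so I would not use it naively. Instead, regard $B\in\Pos{\HH_{S}\tens\HH_{R_2}}$ as the Choi matrix (with respect to the unnormalised maximally entangled vector on $S$) of a completely positive map $\Eop_B:\L(\HH_S)\to\L(\HH_{R_2})$. Unwinding the definition gives $\Eop_B(\rho)=\ptr{\HH_S}\bigl((\rho^T\tens\id[\HH_{R_2}])B\bigr)$. The contraction above then becomes $(\id[\L(\HH_{R_1})]\tens\Eop_B)(A)$, possibly up to conjugating by transposition on $S$, which is harmless because $\Eop_B$ can be defined with the matching convention. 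Since $\Eop_B$ is completely positive and $A\ge0$, the result is positive. A cleaner check that avoids conventions is to write $A=\sum_k\ket{a_k}\bra{a_k}$ and $B=\sum_l\ket{b_l}\bra{b_l}$. Then the product is $\sum_{k,l}\ket{c_{kl}}\bra{c_{kl}}$, where $\ket{c_{kl}}\in\HH_{R_1}\tens\HH_{R_2}$ is the vector obtained by contracting $\ket{a_k}$ with $\ket{\overline{b_l}}$ on $S$; the indices $e^{ij}$ factor as ket index $i$ and bra index $j$, so the contraction splits into a ket part and a bra part. I expect this bookkeeping of transposes and conjugates to be the main obstacle: one has to make sure the kets contract with conjugate kets consistently so that a sum of $\ket{c}\bra{c}$ actually emerges.

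\textbf{Closure under sum.} Summing out a classical variable adds positive operators, and a partial trace of a positive operator is positive. Both are immediate.

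\textbf{The identities.} Commutativity (2) follows from commutativity of the scalar product inside the sum and symmetry of $\symdif$ and $\cap$. Sums commute (3) by Fubini for finite sums and because partial traces over disjoint registers commute. For (4) there are two cases. If $V$ is a classical variable not in $\bX_2$, the value $v$ appears only in the $\qfactfunc{\someqfact_1}$ factor, so it can be pulled out of the $\be_3$-sum. If $V$ is a register not in $\qreg_2$, then $V\subseteq\qreg_1\difset\qreg_2$, and the partial trace sums the diagonal basis elements of $V$ in $\be_1$; this commutes with the $\be_3$-sum by linearity. For associativity (1), expand both sides as sums over the shared-register bases. The hypothesis that no qubit lies in all three scopes guarantees that each qubit lies in at most two scopes. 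Each shared qubit is therefore contracted exactly once on either side, and the scopes agree: $(\qreg_1\symdif\qreg_2)\symdif\qreg_3$ is always associative, and the summed indices coincide because they are the pairwise intersections. Both sides then equal the same triple sum, by reordering finite sums.
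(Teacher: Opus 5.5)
Your proposal is correct. For the four identities you supply the ``simple computations'' that the paper leaves to the reader. Your associativity bookkeeping is the right justification of the side condition: the contracted qubits are exactly the pairwise intersections, and a qubit in all three scopes would be contracted on one side but survive from a different factor on the other.

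For closure under product, the paper takes a more direct route than your Choi argument. It writes $(\someqfact_1\compfact\someqfact_2)(\bx)=(\id\tens\qcup\tens\id)(A\tens B)$, where $\qcup$ on $\L(\HH_S)\tens\L(\HH_S)$ sends $\be\tens\be'$ to $1$ if $\be=\be'$ and to $0$ otherwise. In other words $\qcup(X)=\bra{\Omega}X\ket{\Omega}$ with $\ket{\Omega}=\sum_i\ket{ii}$, which is manifestly completely positive. Since $A\tens B\geq 0$, positivity follows at once, and symmetrically in the two factors.

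Your $\Eop_B$ is the curried version of the same map, $\rho\mapsto(\qcup\tens\id)(\rho\tens B)$; this is the construction the paper uses for $\Psi^{-1}$ in its appendix. So your route needs Choi's theorem (a positive Choi matrix gives complete positivity), where the paper only needs that compression by a vector is completely positive. Your hedge about a transposition on $S$ is unnecessary: with your formula for $\Eop_B$, its Choi matrix is exactly $B$, and $(\id\tens\Eop_B)(A)$ coincides with the product entrywise.

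Your rank-one argument is the most elementary of the three and goes through cleanly. The definition pairs the ket index of $A$ on $S$ with the ket index of $B$, and bra with bra. Hence $\ket{a}\bra{a}$ and $\ket{b}\bra{b}$ yield exactly $\ket{c}\bra{c}$ with $c_{i_1i_2}=\sum_k a_{i_1k}\,b_{ki_2}$, a bilinear contraction with no conjugation. This is just an unpacking of why $\qcup$ is completely positive. Your identification of the product with the link product is also correct, and the paper does not state it.
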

\noindent
(Proof in \cref{app:section3}.)
The side condition in point \ref{item:qfactors_op:1} is always satisfied by quantum BNs.

%% file: 03_QBNs.tex
\section{Quantum Bayesian Networks (\Qfactor-based)}
\label{sec:qbn}

We define quantum Bayesian networks in terms of \qfactor s.
Our definition is equivalent to the instrument-based definition in~\cite{GBN2014}, however moving to \qfactor s allows the semantics to address compositionality of sub-networks.
To each node, we associate a special kind of \qfactor, called a \qCPT.
This ensures that the semantics of the full network is exactly a probability distribution, and not any arbitrary function with values in $\nnReal$.

\subparagraph{Quantum CPTs.}

We define a class of \Qfactor s which plays a similar role to that of CPTs.

\begin{definition}[\QCPT\ (\qCPT)]
\label{def:condition_qfactor_qbn}
A \qfactor\ $\someqfact$ over $(\bX,\qreg)$ is
\begin{itemize}
\item
a \definitive{\qCPT\ for $Z \in \bX$ given $(\bY,\qreg)$} if $\bX = \{Z\} \uplus \bY$, and $\forall \by\in\setofrv{\bY}$:
\begin{equation*}
\left(\sum_{Z} \someqfact\right)(\by)
=
\id[\L(\HH_\qreg)]
\end{equation*}
\item
a \definitive{\qCPT\ for $\qreg' \subseteq \qreg$ given $(\bX, \qreg'')$} if $\qreg = \qreg' \uplus \qreg''$, and $\forall \bx\in\setofrv{\bX}$:
\begin{equation*}
\left(\sum_{\qreg'} \someqfact\right)(\bx)
=
\id[\L(\HH_{\qreg''})]
\end{equation*}
\end{itemize}
\end{definition}

The two limit cases correspond to standard, familiar notions.
\begin{itemize}
\item
In the \emph{pure classical case} ($\qreg=\emptyset$), being a \qCPT\ amounts to being a CPT.
Indeed, a \qCPT\ for $X$ given $(\bY, \emptyset)$ is a \qfactor\ $\someqfact$ over $(\{X\}\uplus\bY,\emptyset)$ which satisfies $\sum_{x\in\setofrv{X}} \someqfact(x,\by) = 1$.
\item
In the \emph{pure quantum case} ($\bX=\emptyset$), being a \qCPT\ amounts to being \emph{Choi state}, a well-known condition on operators that we will detail in \cref{sec:choi_iso}.
\end{itemize}

\subparagraph{Quantum Bayesian Networks.}

We define quantum Bayesian networks as DAGs with a \qCPT for each node.

\begin{definition}[Quantum Bayesian Network (\qqbn)]
A \definitive{quantum Bayesian network} over a set $\bX$ of \rvs\ and a set $\bQ$ of disjoint non-empty quantum registers is a pair $(\G,\someqfactset)$ where:
\begin{itemize}
\item
$\G$ is a directed acyclic graph over the set of nodes $\bX\cup\bQ$;
\item
each out-edge of a node $\qreg\in\bQ$ is labelled by a non-empty register $\qreg_i$, with $\uplus_i \qreg_i = \qreg$;
\item
each out-edge of a node $X\in\bX$ is labelled by $X$;
\item
$\someqfactset$ assigns to each $V \in \bX\cup\bQ$ a \qCPT\ for $V$ given the \rvs\ and qubits labelling its in-edges.
\end{itemize}
\end{definition}

\begin{example}[Alice's \qCPT]
Consider \cref{fig:Bell} with the setting of \cref{ex:Alice_instrument}.
The node $A$ is given the \qfactor\ $\someqfact^A: \setofrv{A}\times\setofrv{X} \to \L(\HH_{q_1})$ below, that is a \qCPT\ for $A$ given $(\{X\},\{q_1\})$:

\begin{adjustbox}{}
\begin{tabular}{ccc@{ }c@{ }l@{}l@{\qquad\qquad}ccc@{ }c@{ }l@{}l}
$(a^t,x^t)$ & $\mapsto$ & $\ket{0}\bra{0}$ & $=$ &
&
$\left[\begin{smallmatrix}
1 & 0\\
0 & 0\\
\end{smallmatrix}\right]$
&
$(a^f,x^t)$ & $\mapsto$ & $\ket{1}\bra{1}$ & $=$ &
&
$\left[\begin{smallmatrix}
0 & 0\\
0 & 1\\
\end{smallmatrix}\right]$
\\
$(a^t,x^f)$ & $\mapsto$ & $\ket{+}\bra{+}$ & $=$ &
$\frac{1}{2}$
&
$\left[\begin{smallmatrix}
1 & 1\\
1 & 1\\
\end{smallmatrix}\right]$
&
$(a^f,x^f)$ & $\mapsto$ & $\ket{-}\bra{-}$ & $=$ &
$\frac{1}{2}$
&
$\left[\begin{smallmatrix}
1 & -1\\
-1 & 1\\
\end{smallmatrix}\right]$
\end{tabular}
\end{adjustbox}
\end{example}

\subsection{Semantics and Compositionality}

The semantics of a \qqbn\ is a probability distribution, defined akin to that of a BN (\cref{lem:BNsem}).

\begin{definition}
For $\mathcal{B} = (\G,\someqfactset)$ a \qqbn\ over \rvs\ $\bX$ and registers $\bQ$, its \definitive{semantics} is:
\begin{equation*}
\sem{\mathcal{B}} \defeq \bigodot_{V \in \bX\cup\bQ} \someqfactset(V)
\end{equation*}
\end{definition}

\begin{proposition}
\label{lem:sem_qbn_dist}
With the above notations, $\sem{\mathcal{B}}$ is a probability distribution over $\bX$.
\end{proposition}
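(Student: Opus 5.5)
The plan is to check the two defining properties of a probability distribution separately: values are non-negative reals, and they sum to $1$. Before that, I will check that the scope of $\sem{\mathcal{B}}$ is $(\bX,\emptyset)$.

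\emph{Scope.} Each register node $\qreg\in\bQ$ is non-empty and its out-edge labels partition $\qreg$. So every qubit $q\in\qreg$ lies in the scope of exactly two \qCPTs: that of $\qreg$ (as an output) and that of the unique child reached by the edge whose label contains $q$ (as an input). Registers in $\bQ$ are disjoint, so no qubit lies in three scopes. Hence the side condition of \cref{lem:qfactors_op}(\ref{item:qfactors_op:1}) holds, and the big product is well defined up to order. Since each qubit appears exactly twice, the symmetric difference in \cref{def:prod} removes every qubit, so $\sem{\mathcal{B}}$ is a \qfactor\ over $(\bX,\emptyset)$.

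\emph{Non-negativity.} By \cref{lem:qfactors_op}, \qfactor s are closed under product. So $\sem{\mathcal{B}}$ maps each $\bx$ to a positive operator on $\HH_\emptyset=\C$, which by \cref{rem:pos} is an element of $\nnReal$.

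\emph{Normalisation.} I must show $\sum_{\bX}\sem{\mathcal{B}}=1$, and I will do this by induction that peels off sinks. The hypothesis has to be generalised so that the induction goes through. A \emph{partial network} is a DAG as above, except that a register node may have some of its qubits labelling no out-edge; such qubits are called \emph{dangling}. Its value is $\sum_{\bX}\sum_{D}\bigodot_V \someqfactset(V)$, where $D$ is the set of dangling qubits. The claim is that this value is $1$; the empty network gives $\ftone$.

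For the step, pick a sink $V$, and use commutativity and \cref{lem:qfactors_op}(4) to push the sum over $V$ (a variable $Z$, or the dangling register $V=\qreg$) onto $\someqfactset(V)$ alone. This is allowed because a sink's own variables occur in no other factor: a classical sink has no children, and a quantum sink has only dangling qubits. By \cref{def:condition_qfactor_qbn}, $\sum_V\someqfactset(V)$ is the identity on the register $R$ of qubits entering $V$ (a constant in the classical inputs, which can then be summed by the remaining factors).

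The key lemma is that $\phi\compfact\id[\L(\HH_R)]=\sum_R\phi$. Writing $\id=\sum_{q}e^{00}+e^{11}$ in $\baseof{R}$, the contraction in \cref{def:prod} picks out the diagonal entries, i.e.\ it computes the partial trace. Consequently, deleting $V$ makes the qubits of $R$ dangling in their producing nodes, and the result is again a partial network with one node fewer and the same value. The induction hypothesis then gives $1$.

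I expect the main obstacle to be this bookkeeping step. One must make sure that the sums can be commuted past the remaining factors (using \cref{lem:qfactors_op}(3),(4)) and that the identity contraction really yields the partial trace in the chosen basis convention, including the $e^{01},e^{10}$ entries. One must also check that removing a quantum sink leaves the remaining graph a legitimate partial network.
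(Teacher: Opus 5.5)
Your proof is correct, and it is the route the paper points to right after the statement: Pearl's classical argument of summing out sinks one at a time. Your dangling-qubit generalization and the identity $\phi\compfact\id[\L(\HH_R)]=\sum_R\phi$ supply the details the paper leaves implicit (note that $\id[\L(\HH_R)]$ should be written $\bigotimes_{q\in R}(e^{00}_q+e^{11}_q)$, not as a sum over $q$). The paper's appendix also derives the result as a corollary of the semantics-preserving correspondence with instrument-based networks (\cref{lem:bij_2_QBNs}) together with the corresponding fact from~\cite{GBN2014}; that route is shorter but leans on the equivalence and an external result, whereas yours stays entirely within the \qfactor\ calculus.
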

The proof is obtained in a similar way to that for the classical result~\cite{Pearl86}.

The formalism allows us to naturally interpret sub-components of a \qqbn, as in \cref{fig:compositional}.
Call $(\G',\someqfactset')$ a \definitive{sub-network} of a \qqbn\ $(\G,\someqfactset)$ when $\G'$ is a sub-graph of $\G$ and $\someqfactset'$ is the restriction of $\someqfactset$ to nodes of $\G'$.
The semantics of a sub-network is defined as for the full network: the product of its \qfactor s.
Compositionality follows from associativity of $\compfact$.

\begin{proposition}[Compositionality]
\label{lem:compositionality_qbn}
Let $\mathcal{B} = (\G,\someqfactset)$ be a \qqbn, and $\mathcal{B}_1 = (\G_1,\someqfactset_1)$ and $\mathcal{B}_2 = (\G_2,\someqfactset_2)$ a partition in two sub-networks (\ie\ $\G = \G_1 \uplus \G_2$).
Then:
\begin{equation*}
\sem{\mathcal{B}} = \sem{\mathcal{B}_1} \compfact \sem{\mathcal{B}_2}
\end{equation*}
\end{proposition}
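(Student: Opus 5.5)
The plan is to reduce the statement to associativity and commutativity of $\compfact$ (\Cref{lem:qfactors_op}). Everything rests on one observation: in a \qqbn\ no qubit lies in the scope of three distinct \qCPTs.

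\emph{Scope of the \qCPTs.} Fix a qubit $q$. By definition of \qqbn, $q$ belongs to exactly one register $\qreg\in\bQ$ (the registers are disjoint). It labels exactly one out-edge of $\qreg$, because the labels of the out-edges of $\qreg$ partition $\qreg$. The \qCPT\ $\someqfactset(V)$ has scope consisting of $V$ together with the \rvs\ and qubits labelling its in-edges. Hence $q$ occurs in the scope of $\someqfactset(\qreg)$, in the scope of $\someqfactset(V)$ for the target $V$ of the edge labelled by a register containing $q$, and in no other. (If $q$ lies in an unlabelled part, i.e.\ a qubit with no out-edge, then $\qreg$ alone has it in scope; I will check this edge case against the labelling convention $\uplus_i \qreg_i=\qreg$, which in fact forces every qubit of $\qreg$ onto some out-edge.) So every qubit occurs in the scope of at most two \qCPTs. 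This is exactly the side condition of point~\ref{item:qfactors_op:1} of \Cref{lem:qfactors_op}, and it also explains the remark after that proposition.

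\emph{Bracketing-free product.} I need the side condition to survive partial products, since associativity will be applied to products of several factors rather than only to single \qCPTs. By \Cref{def:prod}, the quantum scope of $\someqfact\compfact\someqfact'$ is the symmetric difference of the quantum scopes. So a qubit in the scope of a partial product $\bigodot_{V\in S}\someqfactset(V)$ lies in the scope of some $\someqfactset(V)$ with $V\in S$. Take three disjoint subsets of nodes. A qubit common to all three partial products would then lie in the scopes of three distinct \qCPTs, which the first step rules out. By induction on the number of factors, point~\ref{item:qfactors_op:1} and commutativity together show that any two bracketings and orderings of $\bigodot_{V\in\bX\cup\bQ}\someqfactset(V)$ agree. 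This also justifies that $\sem{\somebayes}$ is well defined, and likewise for sub-networks.

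\emph{Conclusion.} Since $\G=\G_1\uplus\G_2$, I will order the product so that the nodes of $\G_1$ come first and bracket it as $\bigl(\bigodot_{V\in\G_1}\someqfactset(V)\bigr)\compfact\bigl(\bigodot_{V\in\G_2}\someqfactset(V)\bigr)$. The first factor is $\sem{\somebayes_1}$ and the second is $\sem{\somebayes_2}$, because $\someqfactset_i$ is the restriction of $\someqfactset$. The main obstacle is the bookkeeping in the second step, namely tracking scopes of partial products through symmetric differences. The classical variables cause no trouble, since classical scopes simply take unions and impose no side condition.
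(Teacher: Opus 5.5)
Your proposal is correct and takes the same route as the paper, which derives compositionality directly from associativity and commutativity of $\compfact$ (\Cref{lem:qfactors_op}), noting that the side condition of point~\ref{item:qfactors_op:1} always holds in a \qqbn. You make explicit what the paper leaves implicit: each qubit lies only in the scopes of the \qCPT\ of its register and of the target of the out-edge it labels, and this side condition persists for partial products over disjoint sets of nodes, so any bracketing and ordering of the product agree.
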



\subsection{Equivalence with Instrument-based Quantum Bayesian Networks}
\label{sec:choi_iso}

\subparagraph{Choi states and \qCPTs.}

The \definitive{Choi-Jamiolkowski isomorphism} (see \eg~\cite{Watrous_book}) is a classical, well-known result allowing to see linear maps from $\L(\HH_1)$ to $\L(\HH_2)$ as matrices in $\L(\HH_1 \otimes \HH_2)$.
It has many applications, basic ones being to check easily whether two quantum operations are equal---by checking their images are the same---or whether a map is completely positive; we refer to~\cite{Watrous_book} for details.
Let us denote by $\choiiso$ the Choi-Jamiolkowski isomorphism.
Given a quantum operation $\Eop : \L(\HH_1) \to \L(\HH_2)$, complete positivity of $\Eop$ is equivalent to positivity of $\choiiso(\Eop)$, and trace-preserving of $\Eop$ is equivalent to $\ptr{\HH_2}(\choiiso(\Eop)) = \id[\L(\HH_1)]$.
Hence, operators $\M\in\L(\HH_1 \otimes \HH_2)$ that are positive and such that $\ptr{\HH_2}(\M) = \id[\L(\HH_1)]$ are exactly the images by $\choiiso$ of quantum operations, and are called \definitive{Choi states}.
Please notice that being a \qCPT\ for a quantum register $Q'$, \ie\ the second item of \cref{def:condition_qfactor_qbn}, corresponds to being a Choi state (once values for the \rvs\ are fixed) with $\left(\sum_{\qreg'} \someqfact\right)(\bx) = \ptr{\HH_{\qreg'}}(\someqfact(\bx)) = \id[\L(\HH_{\qreg''})]$---the positivity requirement being already part of the definition of a \Qfactor.

\subparagraph{Equivalence with Instrument-based Quantum Bayesian Networks.}

Our definition of 
quantum Bayesian networks is equivalent to the instrument-based one of~\cite{GBN2014}.
Indeed, the definition of the DAGs is the same, and it is straightforward to provide a bijection between a \qCPT associated to a node in our formulation, and instruments associated to the same node in~\cite{GBN2014}.
The proof is similar to that for the Choi-Jamiolkowski isomorphism (see \cref{app:equivalence_qbns}).
The composition of instruments in~\cite{GBN2014} corresponds to the product of \qCPTs.

%% file: 04_QPNs.tex
\section{Quantum (Bayesian) Proof-Nets}
\label{sec:qpns}

In this section, we introduce \qpns.
They can be thought of as a typed version of \QBN s, built on the graph formalism of Multiplicative Linear Logic.
Any \QBN\ can be encoded as a \qpn\ and, conversely, to any \qpn of appropriate type is associated a \QBN.
What we gain by moving to the typed setting of proof-nets is:
\begin{itemize}
\item 
This formalism encodes \qqbn s and enables formalizing (and working with) sub-networks.
\item
Typing allows us to \emph{modularly compose} \qpns of compatible interfaces.
\item
Finally, we have \emph{a linear form of high-order}, as already sketched in \cref{fig:modular} where the network $\N_1$ of type $(A\lollipop C) \lollipop D$ is expecting as input a network of type $(A\lollipop C)$.
\end{itemize}

\subparagraph{Formulas.}

We assume given two countable sets of \textdef{names} denoted by metavariables $X, Y, \dots$ and $q_1, q_2, \dots$ which we call respectively \textdef{classical names} and \textdef{q-names}.
Formulas are those of the Multiplicative fragment of Linear Logic (\MLL):
\begin{equation*}
	\FormA,\FormB ::=
	X^+	\mid X^- \mid q^+ \mid q^-
	\mid\FormA\otimes\FormB
	\mid\FormA\parr\FormB
\end{equation*}

\textdef{Negation} $(\cdot)\orth$ is defined inductively by
$(X^+)\orth\defeq X^-$,
$(X^-)\orth\defeq X^+$, 
$(q^+)\orth \defeq q^-$,
$(q^-)\orth \defeq {q^+}$,
$(\FormA\otimes\FormB)\orth \defeq\FormA\orth\parr\FormB\orth$ and
$(\FormA\parr\FormB)\orth \defeq\FormA\orth\otimes\FormB\orth$.
\textdef{Linear arrow} is defined as usual by $F\lollipop G \defeq F^\bot \parr G$. 
We reserve the notation $\qreg^+$ for formulas of the shape $\otimes_{i\in I} q_i^+$.
Capital Greek letters $\Gamma,\Delta,\dots$ vary over finite sequences of formulas, \ie\ $\Delta = \FormA_1,\dots, \FormA_n$.
By $\Nm F$ we denote the set of all names which appear in the formula $F$, and similarly for $\Nm \Delta$.
For example, $\Nm{X^-\parr (Y^+\otimes X^+), q^+\parr q^-} =\{X,Y,q\}$.

A key property of formulas in Linear Logic is (positive/negative) \textdef{polarity}, which we indicate for atoms and extend to formulas as follows, defining (strictly) polarized formulas:
\begin{align*}
\textbf{Positive formulas: }& P, P'::= X^+ \mid q^+ \mid P \otimes P'
\\
\textbf{Negative formulas: }& N, N'::= X^- \mid q^- \mid N \parr N'
\end{align*}

\begin{remark}\label{rem:polarity}
Dual polarities carry a connotation of output/input, answer/question, active/ passive, player/opponent, etc.
Negative atoms are seen as \emph{inputs}, and positive ones as \emph{outputs}.
An intuition---at the base of the Linear Logic interpretation of the information flow---is that information travels upwards (\resp\ downwards) on negative (\resp\ positive) atoms.
\end{remark}

\subparagraph{MLL proof-nets (with boxes).}

In Linear Logic, proofs admit both a \emph{sequent calculus} syntax in the form of \textdef{\pts} and a \emph{graph syntax}, in the form of \textdef{\pn}.
Here, we are interested in the latter.
We omit the sequent calculus presentation---which is straightforward to deduce for the reader familiar with Linear Logic, but not relevant for our purpose.

\begin{definition}[Proof-Nets]
\label{def:pn}
We call \textbf{typed} a labelled (partial\footnote{A partial graph may have \emph{pending edges}, which may be either out-edges (as in \cref{ex:bell_pn}) or in-edges.}) graph $\somepn$ built from the alphabet of nodes on \cref{fig:grammar_nodes_pn}, with edges labelled by \MLL formulas.
Edges incident to a node $\somenode$ need to respect the typing conditions of the grammar, and are classified either as \definitive{premises} of $\somenode$ (depicted above $\somenode$) or as its \definitive{conclusions} (depicted below $\somenode$).
We further require that an edge is the conclusion of at most one node and the premise of at most one node.
\begin{itemize}
\item
\textbf{Proof-nets.}
A typed graph $\somepn$ is a \definitive{proof-net} if it has no pending premises (\ie\ every edge in $\somepn$ is the conclusion of some node) and if it is \definitive{correct}:
every cycle in $\somepn$ uses the two premises of a same $\parr$-node or the two premises of a same $\cc$-node.
\item
\textbf{Conclusions.}
The labels of the edges which are premise of no node (\aka\ pending conclusions) are called the \definitive{conclusions} of $\somepn$.
The \definitive{type} of a \pn $\somepn$ of conclusions $\Delta=C_1, \dots, C_k$ is the sequent $\vdash \Delta$.
We also write $\somepn \der \Delta$.
\end{itemize}
\end{definition}

\begin{example}
The typed graph in \cref{ex:bell_pn}---where $\qreg^+=q_1^+\tens q_2^+$---is a \pn.
\end{example}

The grammar in \cref{fig:pn} extends the standard nodes of \MLL\ with a new node, called a \textbf{box}, which has \emph{exactly one positive conclusion}---its \definitive{\main}---of type either $\X$ or $Q^+=\otimes_{i\in I} q_i^+$.
The negative conclusions---called \definitive{inputs}---are atomic.
All atoms appearing in the conclusions of a box are pairwise distinct.
The grammar is similar to that introduced in~\cite{EhrhardFP23}.
However, while in~\cite{EhrhardFP23} to each box is associated a CPT, here we associate a \qCPT.

\begin{figure}
\begin{adjustbox}{}
\begin{tikzpicture}
\begin{scope}[every node/.style={draw=none}, every path/.style={draw=black}]
	\node (ax) at (0,0) {$\ax$};
	\coordinate (axc1) at (-.75,-.5);
	\coordinate (axc2) at (.75,-.5);
	\path (axc1) |- node[named edge,pos=.25]{$X^-$} (ax) -| node[named edge,pos=.75]{$X^+$} (axc2);
\end{scope}
\end{tikzpicture}
\enskip
\begin{tikzpicture}
\begin{scope}[every node/.style={draw=none}, every path/.style={draw=black}]
	\node (ax) at (0,0) {$\ax$};
	\coordinate (axc1) at (-.75,-.5);
	\coordinate (axc2) at (.75,-.5);
	\path (axc1) |- node[named edge,pos=.25]{$q^-$} (ax) -| node[named edge,pos=.75]{$q^+$} (axc2);
\end{scope}
\end{tikzpicture}
\enskip
\begin{tikzpicture}
\begin{scope}[every node/.style={draw=none}, every path/.style={draw=black}]
	\node (ax) at (0,0) {$\cut$};
	\coordinate (axc1) at (-.75,.5);
	\coordinate (axc2) at (.75,.5);
	\path (axc1) |- node[named edge,pos=.25,left]{$A\orth$} (ax) -| node[named edge,pos=.75,right]{$A$} (axc2);
\end{scope}
\end{tikzpicture}
\enskip
\begin{tikzpicture}
\begin{scope}[every node/.style={draw=none}, every path/.style={draw=black}]
	\node (c) at (0,0) {$\tens$};
	\coordinate (cp1) at (-.5,.5);
	\coordinate (cp2) at (.5,.5);
	\coordinate (cc1) at (0,-.75);
	\path[out=-90,in=180] (cp1) edge node[named edge,left]{$A$} (c);
	\path[out=-90,in=0] (cp2) edge node[named edge,right]{$B$} (c);
	\path (c) -- node[named edge]{$A\tens B$} (cc1);
\end{scope}
\end{tikzpicture}
\enskip
\begin{tikzpicture}
\begin{scope}[every node/.style={draw=none}, every path/.style={draw=black}]
	\node (c) at (0,0) {$\parr$};
	\coordinate (cp1) at (-.5,.5);
	\coordinate (cp2) at (.5,.5);
	\coordinate (cc1) at (0,-.75);
	\path[out=-90,in=180] (cp1) edge node[named edge,left]{$A$} (c);
	\path[out=-90,in=0] (cp2) edge node[named edge,right]{$B$} (c);
	\path (c) -- node[named edge]{$A\parr B$} (cc1);
\end{scope}
\end{tikzpicture}
\enskip
\begin{tikzpicture}
\begin{scope}[every node/.style={draw=none}, every path/.style={draw=black}]
	\node (c) at (0,0) {$\cc$};
	\coordinate (cp1) at (-.5,.5);
	\coordinate (cp2) at (.5,.5);
	\coordinate (cc1) at (0,-.75);
	\path[out=-90,in=180] (cp1) edge node[named edge,left]{$\negatom{X}$} (c);
	\path[out=-90,in=0] (cp2) edge node[named edge,right]{$\negatom{X}$} (c);
	\path (c) -- node[named edge]{$\negatom{X}$} (cc1);
\end{scope}
\end{tikzpicture}
\enskip
\begin{tikzpicture}
\begin{scope}[every node/.style={draw=none}, every path/.style={draw=black}]
	\node (w) at (0,0) {$\cw$};
	\coordinate (cc1) at (0,-.75);
	\path (w) -- node[named edge]{$X^-$} (cc1);
\end{scope}
\end{tikzpicture}
\enskip
\begin{tikzpicture}
\begin{scope}[every node/.style={draw=none}, every path/.style={draw=black}]
	\node (w) at (0,0) {$\cw$};
	\coordinate (cc1) at (0,-.75);
	\path (w) -- node[named edge]{$q^-$} (cc1);
\end{scope}
\end{tikzpicture}
\enskip
\begin{tikzpicture}
\begin{scope}[every node/.style={rectangle,draw=black}, every path/.style={draw=black}]
	\node[minimum width=25mm,minimum height=6mm] (b) at (0,0) {$X/\qreg$};
	\coordinate (bc1) at (-1,-1);
	\coordinate (bc2) at (-.2,-1);
	\coordinate (bcn) at (1,-1);
	\node[draw=none] at (-.6,-.5) {$\dots$};
	\node[draw=none] at (.5,-.5) {$\dots$};
	\path (-1,-.3) -- node[named edge,left,pos=.75]{$\negatom{Y_1}$} (bc1);
	\path (-.2,-.3) -- node[named edge,pos=.75]{$\negatom{q_1}$} (bc2);
	\path (1,-.3) -- node[named edge,right,pos=.75]{$\posatom{X}/\qreg^+$} (bcn);
\end{scope}
\end{tikzpicture}
\end{adjustbox}
\caption{
	Grammar of nodes for proof-nets;
	atoms in the conclusions of a box are pairwise distinct.
}
\label{fig:grammar_nodes_pn}\label{fig:pn}
\end{figure}

\subparagraph{Quantum \pns.}

As before, we identify each name $X$ with the \rv\ $(X, \{x^{\true}, x^{\false}\})$ and each q-name $q_i$ with a distinct qubit.
With this implicit assumption, if $P$ is the \main\ of a box $\bbox$, then $\Nm P$ is either a random variable $\{X\}$ or a quantum register $\{q_1, \dots, q_k\}$.

\begin{definition}[Quantum \pns]\label{def:qbn}
A \definitive{\qpn} (\qbpn) is a pair $(\somepn, \Phi)$ where:
\begin{enumerate}
\item\label{item:qbn:1}
$\somepn$ is a proof-net such that all the atoms appearing in the outputs of distinct boxes are pairwise distinct, and all the negative q-atoms which are inputs of distinct boxes are pairwise distinct.
\item
$\someqfactset$ assigns to each box $\bbox$ of \main $P$ a \qCPT\ $\someqfactset(\bbox)$, for $\Nm{P}$ given the (classical and q-) names of the inputs of $\bbox$.
\end{enumerate}
\end{definition}

Thanks to point \ref{item:qbn:1}, we denote a box in a \qbpn $\somepn$ by its positive conclusion---\eg\ $\boxof{P}$ for the unique box of \main $P$. When $\Phi$ is irrelevant, we often write ``a \qbpn $\somepn$'' for ``a \qbpn $(\somepn,\Phi)$''.

\subparagraph{Reduction rules \& Normal forms.}

\Qpns are equipped with the standard reduction rules for \MLL proof-nets, presented in \Cref{fig:rewriting_pn}.
These rules define a binary relation $\Red$ on \qbpns, written $\somepn \Red \somepn'$ (read $\somepn$ reduces to $\somepn'$).
A \qbpn $\somepn$ is in \definitive{normal} form (or just normal) if there is no $\somepn'$ such that $\somepn \Red \somepn'$. 
Please notice that, because of boxes, a proof-net in normal form can still contain $\cut$-nodes---see \eg\ the \pn in \cref{ex:bell_pn}.

\begin{figure}
\begin{adjustbox}{}
\begin{tikzpicture}[baseline=(base)]
\coordinate (base) at (0,-.25);
\begin{scope}[every node/.style={draw=none}, every path/.style={draw=black}]
	\node (ax) at (0,0) {$\ax$};
	\node (cut) at (1.5,-.5) {$\cut$};
	\coordinate (c) at (-.75,-.5);
	\coordinate (m) at (.75,-.25);
	\coordinate (h) at (2.25,0);
	\path (ax) -| node[named edge,pos=.75,left]{$A\orth$} (c);
	\path (ax) -| node[named edge,pos=1]{$A$} (m);
	\path (cut) -| (m);
	\path (cut) -| node[named edge,pos=.75,right]{$A\orth$} (h);
\end{scope}
\end{tikzpicture}
$\quad\Red\quad$
\begin{tikzpicture}[baseline=(base)]
\coordinate (base) at (0,0);
\begin{scope}[every node/.style={draw=none}, every path/.style={draw=black}]
	\coordinate (c) at (0,-.5);
	\coordinate (h) at (0,.5);
	\path (c) -- node[named edge,pos=.5,right]{$A\orth$} (h);
\end{scope}
\end{tikzpicture}
\qquad\vrule\qquad
\begin{tikzpicture}[baseline=(base)]
\coordinate (base) at (0,0);
\begin{scope}[every node/.style={draw=none}, every path/.style={draw=black}]
	\node (tens) at (-.75,0) {$\tens$};
	\coordinate (tens1) at (-1.25,.5);
	\coordinate (tens2) at (-.25,.5);
	\node (par) at (1.75,0) {$\parr$};
	\coordinate (par1) at (1.25,.5);
	\coordinate (par2) at (2.25,.5);
	\node (cut) at (.5,-.5) {$\cut$};
	\path[out=-90,in=180] (tens1) edge node[named edge,left]{$A$} (tens);
	\path[out=-90,in=0] (tens2) edge node[named edge,right]{$B$} (tens);
	\path (tens) |- node[named edge,left,pos=.3]{$A \tens B$} (cut);
	\path[out=-90,in=180] (par1) edge node[named edge,left]{$A\orth$} (par);
	\path[out=-90,in=0] (par2) edge node[named edge,right]{$B\orth$} (par);
	\path (par) |- node[named edge,right,pos=.3]{$A\orth \parr B\orth$} (cut);
\end{scope}
\end{tikzpicture}
$\quad\Red\quad$
\begin{tikzpicture}[baseline=(base)]
\coordinate (base) at (0,0);
\begin{scope}[every node/.style={draw=none}, every path/.style={draw=black}]
	\coordinate (tens1) at (-1.25,.5);
	\coordinate (tens2) at (-.25,.5);
	\coordinate (par1) at (1.25,.5);
	\coordinate (par2) at (2.25,.5);
	\node (cut1) at (0,0) {$\cut$};
	\node (cut2) at (1,-.5) {$\cut$};
	\path[out=-90,in=180] (tens1) |- node[named edge,left,pos=.3]{$A$} (cut1);
	\path[out=-90,in=0] (tens2) |- node[named edge,left,pos=.4]{$B$} (cut2);
	\path[out=-90,in=180] (par1) |- node[named edge,right,pos=.3]{$A\orth$} (cut1);
	\path[out=-90,in=0] (par2) |- node[named edge,right,pos=.4]{$B\orth$} (cut2);
\end{scope}
\end{tikzpicture}
\end{adjustbox}
\caption{Reduction rules of proof-nets.}
\label{fig:rewriting_pn}
\end{figure}

\begin{proposition}
The reduction $\Red$ preserves both the correctness and the conclusions of a proof-net, is terminating and is confluent.
\end{proposition}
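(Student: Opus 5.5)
We treat the four properties separately: preservation of conclusions, preservation of correctness, termination, and confluence. The box nodes and the new $\cc$/$\cw$ nodes never take part in a redex: the rules of \cref{fig:rewriting_pn} only involve $\ax$, $\cut$, $\tens$ and $\parr$ nodes. We can therefore adapt the standard \MLL arguments, checking only that the new nodes do not break them.

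\emph{Conclusions and typing.} Each rule is local: it replaces a subgraph with the same pending edges (same labels) by another one. In the axiom rule, the pending $A\orth$ premise of the cut and the $A\orth$ conclusion of the axiom are merged into a single edge. In the multiplicative rule, the two premises of $\tens$ and of $\parr$ are reconnected to two cuts of dual types $A/A\orth$ and $B/B\orth$. Since no node outside the redex is modified, the resulting typed graph still satisfies the grammar and has the same pending conclusions. No pending premise is created, and the box side conditions (distinct atoms, \qCPT assignment $\Phi$) are untouched, since boxes are not altered.

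\emph{Correctness.} Suppose, for a contradiction, that the reduct contains a cycle not using both premises of some $\parr$ or $\cc$ node, and lift it to the original net. For the axiom rule, a cycle through the merged edge lifts to a cycle through the axiom–cut path, with the same switching behaviour. For the $\tens/\parr$ rule, a bad cycle may pass through the new cuts. If it uses only one of them, replace that cut by the path through the original $\tens$, cut and $\parr$. If it uses both cuts, then in the original net it corresponds to a cycle going from $A$ to $B$ via the $\tens$ node (legitimate, $\tens$ has no restriction). This cycle avoids using both $\parr$ premises only if the original cycle did not already use them. Otherwise, rerouting gives a path from $A\orth$ to $B\orth$ avoiding the $\parr$, hence, composed with the $\tens$ side, a bad cycle in the original net. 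This is exactly the standard Danos–Regnier-style argument, phrased with ``every cycle uses both premises of a $\parr$ or $\cc$''. $\cc$ nodes behave like $\parr$ and are untouched, so the case analysis is the same. This step is the main obstacle: the cycle lifting must be done carefully, especially the case where a cycle traverses both new cuts.

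\emph{Termination and confluence.} Each step strictly decreases the number of $\ax$, $\tens$ and $\parr$ nodes: the axiom rule removes one $\ax$, and the multiplicative rule removes one $\tens$ and one $\parr$. Hence reduction terminates. By Newman's lemma, it remains to prove local confluence. Two distinct redexes can overlap only in the critical pair formed by a cut between two axioms ($\ax$–$\cut$–$\ax$ on $A$). Reducing with either axiom yields the same graph, a single edge, up to the identification of edges. Non-overlapping redexes commute trivially since the rules are local. Hence $\Red$ is confluent.
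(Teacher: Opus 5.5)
The paper gives no proof of this proposition and treats it as the standard \MLL result. Your plan is the natural justification: boxes, $\cc$ and $\cw$ nodes never belong to a redex, and $\cc$ plays the role of $\parr$ in the criterion. Conclusions, termination, the axiom case of correctness, and the case of a cycle through a single new cut are handled correctly.

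\textbf{The gap.} It is the case you flag yourself: a bad cycle of the reduct passing through \emph{both} new cuts. What you wrote there does not work. Write $e_A,e_B$ for the premises of the $\tens$ and $e_{A\orth},e_{B\orth}$ for those of the $\parr$. Such a cycle consists of these four edges plus two disjoint arcs lying outside the redex. Each arc joins the upper end of one of $e_A,e_{A\orth}$ to the upper end of one of $e_B,e_{B\orth}$. The argument must split on how the arcs pair these endpoints.

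\emph{First case: an arc joins $e_A$ to $e_B$.} Close \emph{that} arc through the $\tens$ node. This gives a cycle of the original net which does not touch the redex $\parr$. It uses no $\parr$/$\cc$ premise not already used by the reduct cycle, so it is bad. Closing the other arc, from $e_{A\orth}$ to $e_{B\orth}$, through the $\parr$ would be useless, since it uses both premises.

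\emph{Second case: the arcs join $e_A$ to $e_{B\orth}$ and $e_{A\orth}$ to $e_B$.} Here there is no arc from $A$ to $B$ at all, so the ``cycle going from $A$ to $B$ via the $\tens$ node'' you invoke does not exist. The bad cycle of the original net is obtained instead by closing the arc from $e_A$ to $e_{B\orth}$ through the $\tens$, the $\cut$ and the $\parr$ nodes. That cycle uses exactly one premise of the $\parr$.

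Your sentences about the cycle avoiding both $\parr$ premises ``only if the original cycle did not'', and about a path ``from $A\orth$ to $B\orth$ avoiding the $\parr$'', match neither case. As written, the key step of correctness preservation is therefore not established, although the case split above repairs it.

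\textbf{A smaller inaccuracy in the confluence part.} The $\ax$--$\cut$--$\ax$ configuration is not the only overlap. Two axiom redexes also share a node when a single axiom has both its conclusions cut, against two distinct cuts. Both reducts are then the same net, in which the two remaining premises meet in one cut. So local confluence still holds, but this critical pair must be listed and checked. Also, in the $\ax$--$\cut$--$\ax$ case the common reduct is a single axiom node, not a single edge, unless you read axioms as links.
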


\subsection{Quantum Bayesian Networks and Quantum Proof-Nets}\label{sec:QBN_qpn}

We now relate the two formalisms, quantum Bayesian networks and quantum proof-nets.

\subparagraph{Polarized proof-nets as directed graphs.}

Polarities embed a notion of orientation (\Cref{rem:polarity}), allowing to read a typed graph as a DAG.
Restricting our attention to proof-nets whose edges are labelled only by polarized formulas, known as \textdef{polarized proof-nets}~\cite{Laurent03}, to each edge is associated a \emph{direction}: upwards for edges labelled by a negative formula, downwards for edges labelled by a positive formula.
The following is well-known for polarized proof-nets.

\begin{lemma}[Polarized correctness~\cite{phdlaurent}]\label{lem:pol_correct}
Let $\somepn$ be a typed graph (\Cref{def:pn}) whose edges are labelled only by polarized formulas. 
The graph $\somepn$ is \emph{correct} if and only if orienting the edges of $\somepn$ according to their polarity yields a DAG.
\end{lemma}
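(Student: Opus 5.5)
We must show: for a typed graph whose edges all carry polarized formulas, correctness (every cycle uses both premises of a same $\parr$-node or of a same $\cc$-node) is equivalent to acyclicity of the oriented graph. The plan is to first list, node by node, how edge directions look locally, and then compare directed cycles with the ``bad'' undirected cycles forbidden by correctness.

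\emph{Local analysis.} Orient each edge upward if negative and downward if positive. An $\ax$-node has one negative and one positive conclusion: one incoming edge (from below, the negative one) and one outgoing (the positive one), so it is a pass-through. A $\cut$-node has premises $A,A\orth$, one positive and one negative: the positive enters from above and the negative leaves upward, again a pass-through. A $\tens$-node has positive premises and conclusion: both premises enter, the conclusion leaves. A $\parr$-node (and likewise $\cc$ on $X^-$) has negative premises and conclusion: the conclusion enters from below and both premises leave upward. A $\cw$-node has only an outgoing-free incoming conclusion (a sink for the upward flow), and a box receives its negative inputs and emits its positive output. The key observation is that the nodes with two \emph{outgoing} edges are exactly the $\parr$ and $\cc$ nodes, while every other node has at most one outgoing edge.

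\emph{Correct $\Rightarrow$ DAG.} A directed cycle, viewed as an undirected cycle, passes through each node entering by an incoming edge and leaving by an outgoing one; hence it never uses two outgoing edges at a node, so it never uses both premises of a $\parr$/$\cc$ (those are both outgoing). Thus a directed cycle is a cycle violating correctness; contrapositively, correctness gives acyclicity. The hard direction is the converse.

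\emph{DAG $\Rightarrow$ correct.} Take an elementary undirected cycle not using two premises of any $\parr$/$\cc$; I claim it is a directed cycle. Along the cycle, at each node the two cycle edges are either (in,out), (in,in) or (out,out). An (out,out) pair requires two outgoing edges at a node, only possible at $\parr$/$\cc$ using both premises, excluded (note a box has one outgoing edge, $\ax$/$\cut$ one). So there are no (out,out) turns. Now count: each edge of the cycle is outgoing at exactly one of its endpoints, so the number of ``out'' incidences equals the number of edges, which equals the number of nodes on the cycle; since no node contributes two outs, every node contributes exactly one, i.e.\ every turn is (in,out), and the cycle is consistently directed — contradicting acyclicity. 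The main subtlety to check is this counting step together with the local table (in particular that boxes and $\cut$ really have at most one outgoing edge among cycle-relevant edges, and handling of pending edges, which lie on no cycle).
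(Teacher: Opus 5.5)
Your proof is correct. The paper gives no proof of this lemma; it cites it from Laurent's thesis as a known fact, so your argument supplies what the paper omits. It is essentially the standard one: along an undirected cycle the orientation can only flip at a \emph{source} (two outgoing edges) or a \emph{sink} (two incoming edges), and these alternate. In a strictly polarized graph, sources occur only at $\parr$/$\cc$ nodes traversed through both premises. Your count of outgoing incidences (one per edge, hence exactly one per node once no node contributes two) packages this alternation neatly. It also covers degenerate cycles without extra work, such as a $\cc$-node whose conclusion is one of its own premises (a loop), or an $\ax$-node whose two conclusions are cut against each other (two parallel edges).

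Two small points are worth stating explicitly. First, both directions should be run on elementary cycles. For DAG $\Rightarrow$ correct, note that any cycle violating the criterion contains an elementary cycle, on a subset of its edges, that still violates it. For correct $\Rightarrow$ DAG, either extract an elementary directed cycle, or observe that $\parr$ and $\cc$ have a single incoming edge (their conclusion). A directed cycle without repeated edges therefore crosses each of them at most once, so it contains at most one of their premises. Second, your local table for $\tens$ and $\parr$ depends on strict polarization. A polarized conclusion $A\tens B$ (resp.\ $A\parr B$) forces both premises to be positive (resp.\ negative), and this is exactly what guarantees that only $\parr$/$\cc$ nodes can have two outgoing edges.
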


By polarized correctness, every polarized \pn\ $\somepn$ can be seen as a DAG where edges are oriented according to their polarity. 
This induces a DAG on the boxes $\Boxes\somepn=\{\boxof{P_1}, \dots, \boxof{P_n}\}$ of $\somepn$, hence a DAG $\G_\somepn$ on $\Nm{P_1}, \dots, \Nm{P_n}$: we have a directed edge $\Nm{P_i} \rightarrow \Nm{P_j}$ in $\G_\somepn$ whenever there is a directed path from $\bbox^{P_i}$ to $\bbox^{P_j}$ in $\somepn$.

\subparagraph{From \qpns to \qqbn's.}

We call \definitive{\good} a \qpn $\somepn \der \Delta$ where all the atoms appearing (as sub-formulas) in $\Delta$ are \emph{positive and classical}.

\begin{lemma}
To each \emph{polarized} \qbpn $(\somepn,\Phi_\somepn)$ of boxes $\Boxes{\somepn}=\{\bbox^{P_1}, \dots, \bbox^{P_n}\}$ is associated a pair $(\G, \Phi_\G)$ with $\G$ the DAG induced by $\somepn$ on $\Nm{P_1}, \dots, \Nm{P_n}$ and $\Phi_\G(\Nm P)\defeq \Phi_\somepn(\bbox^P)$.
If $\somepn$ is \good, the pair $(\G, \Phi_\G)$ is a \QBN.
\end{lemma}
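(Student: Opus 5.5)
Since $\somepn$ is polarized, \cref{lem:pol_correct} gives that $\somepn$ with edges oriented by polarity is a DAG. Hence the relation on $\Boxes\somepn$ given by directed paths is acyclic, and $\G$ is a DAG. The nodes $\Nm{P_i}$ are pairwise distinct and disjoint: by point \ref{item:qbn:1} of \cref{def:qbn}, outputs of distinct boxes have distinct atoms, and atoms inside one output are distinct by the box grammar. Each $\Nm{P_i}$ is either a single \rv\ $X$ or a non-empty register $\qreg$. So $\G$ is a DAG over a set $\bX$ of \rvs\ and a set $\bQ$ of disjoint non-empty registers, and $\Phi_\G$ is well defined. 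This much needs no closedness.

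The real content is to match each box's inputs with the labels of the in-edges of its node, and to label the out-edges. I would do this by tracing atoms along the oriented net, using a \emph{tracing lemma}. In a polarized \pn, start from a negative input atom occurrence $X^-$ (or $q^-$) of a box $\bbox$ and follow the edge upwards, i.e.\ along its orientation. Passing through $\parr$/$\tens$ nodes (subformula to conclusion or back), $\cut$ and $\ax$ nodes (which switch polarity and reverse direction), and $\cc$ nodes (going down to their conclusion), the path is uniquely determined. It ends either in a pending conclusion, at a $\cw$ node, or at an occurrence of $X^+$ (resp.\ $q^+$) inside the output of some box. Atom names are preserved along the path, since $\ax$/$\cut$ identify dual atoms and multiplicative nodes preserve subformulas.

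Closedness then rules out ending in a pending conclusion, because conclusions contain only positive classical atoms, never negative ones. I also have to exclude the $\cw$ and $\cc$ endings for q-atoms. The grammar puts $q^-$ on a $\cw$, so such weakenings would leave a quantum input dangling. I would handle this either by showing the ensuing \qCPT condition still makes sense (the input is then trivially traced) or, more likely, by noting that closed nets as intended exclude it. I expect this point to be the main obstacle, together with proving that every \emph{positive} q-atom $q^+$ in a box output is consumed by exactly one box input, i.e.\ linearity.

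For linearity, trace downward from each $q^+$. It cannot end in a conclusion (closedness), nor in a $\cc$, which exists only for classical atoms. So it reaches exactly one $q^-$ input. Distinctness of negative q-inputs across boxes (point \ref{item:qbn:1}) gives injectivity. Gathering, for the box of output $\qreg$, the q-atoms going to each target box yields the labels $\qreg_i$ with $\uplus_i\qreg_i=\qreg$. Classical outputs $X^+$ may reach several $X^-$ through $\cc$ trees, or none through $\cw$ or a pending conclusion, which matches the unrestricted out-edges labelled $X$.

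Finally, the traced paths are exactly the directed paths from box to box, so the in-edge labels of $\Nm P$ in $\G$ are the names of the inputs of $\bbox^P$. There is one subtlety: $\G$ was defined via \emph{all} directed paths, possibly transitive ones. I would restrict to the \emph{immediate} edges given by tracing, which is still a DAG with the same node set. Since $\Phi_\somepn(\bbox^P)$ is a \qCPT\ for $\Nm P$ given the input names of $\bbox^P$, $\Phi_\G$ assigns each node a \qCPT\ given its in-edge labels, and $(\G,\Phi_\G)$ is a \QBN.
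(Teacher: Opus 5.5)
\textbf{The gap: weakening on q-atoms.} A $\cw$ node has no premises and a negative conclusion, so a backward trace from a box input can never end there. Your worry about a ``dangling quantum input'' is therefore misplaced. A $\cw$ on $q^-$ is instead a possible endpoint of the \emph{forward} trace from an output atom $q^+$: it discards that qubit. This is exactly the case your linearity step skips when it concludes ``so it reaches exactly one $q^-$ input''.

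No further argument can close this, because the case really occurs. Take a box with output $q^+$ and no inputs, assigned a density matrix $\rho$. Cut its output against a $\cw$ on $q^-$, and add a box with output $X^+$ and no inputs. This is a closed polarized \qbpn of type $\vdash X^+$. Yet in $\G$ the node $\{q\}$ has no out-edge, so $\uplus_i\qreg_i=\emptyset\neq\{q\}$. Moreover $\rho\odot\someqfact^X$ still has $q$ in its scope, so it is not a distribution over $\{X\}$ as \cref{lem:sem_qbn_dist} would require. Hence the partition condition, and the statement as literally written, need one of the following repairs:
\begin{itemize}
\item An extra hypothesis: no $\cw$ node carries a q-atom. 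In a closed net, each such node discards some box output.
\item A modified construction: replace $\Phi_\somepn(\bbox^P)$ by its partial trace over the set $D$ of discarded qubits. This is still a \qCPT\ for $\Nm P\setminus D$, since $\sum_{\Nm P\setminus D}\sum_D\someqfact=\sum_{\Nm P}\someqfact$. Nodes whose register becomes empty are dropped, and the discard propagates to their quantum inputs.
\item A relaxed QBN definition allowing $\uplus_i\qreg_i\subseteq\qreg$, with leftover qubits traced out. The sum over $\Nm\somepn\setminus\Nm\Delta$ in \cref{th:sem_coherence} does this anyway.
\end{itemize}
The paper's own justification does not address this case either.

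\textbf{The rest of your argument.} It is sound, and more explicit than the paper, which gives no separate proof. The paper only invokes \cref{lem:pol_correct} for acyclicity and the definition of the induced DAG, and leaves implicit that closedness makes every box input fed by a box output. Your atom tracing is the natural way to make that precise.

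Your remark on transitive paths is correct and necessary. Read literally, ``a directed path from $\bbox^{P_i}$ to $\bbox^{P_j}$'' yields the transitive closure. That would give nodes in-edges labelled by names that are not inputs of their box, and give register nodes out-edges with no register to label them. So only paths not crossing an intermediate box should count.

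One small slip: from a box input $X^-$ you trace \emph{against} the orientation, that is, downwards away from the box, not along it.
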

 
\begin{example}
The \pn\ in \cref{ex:bell_pn} is polarized and \good; its DAG is the one in \Cref{fig:Bell}.
\end{example}


\subparagraph{From \qqbn s to \qpns.}

A quantum Bayesian network can be encoded as a \qpn rather directly.
To a \qqbn\ $(\G, \Phi)$, it is straightforward to associate \emph{a typed graph} $\Net_\G$ as sketched in \cref{fig:Bell,ex:bell_pn}.
It is immediate that $\Net_\G$ is polarized and \good.
The only delicate point is checking $\Net_\G$ is correct, hence a proof-net: it follows from \cref{lem:pol_correct}, as $\G$ is a DAG.
The pair $(\Net_\G, \Phi)$ clearly satisfies all conditions in \cref{def:qbn}, making it a \qbpn.

\subparagraph{Every pn has a polarized core.}

The encoding of \QBNs into polarized \good\ \qpns is sound and complete.
In fact, we can associate a \QBN\ to any \good \qpn (first reducing it to its normal form).

\begin{proposition}[Polarized core of a Normal Form]
Every \emph{normal} \pn $\somepn \der F$ consists of a polarized \pn $\pol{\somepn}$ on top of the formula tree of $F$.
\end{proposition}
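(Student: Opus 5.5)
We need to prove: every normal proof-net $\somepn \der F$ consists of a polarized proof-net $\pol{\somepn}$ on top of the formula tree of $F$. The idea is to cut $\somepn$ along the maximal non-polarized part. Since $\somepn$ is normal, the non-polarized edges should lie only in the formula tree of the conclusion.

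The argument proceeds in four steps.

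\emph{Step 1: the shape of non-polarized edges.} Call an edge \emph{mixed} if its label is not a polarized formula. Atoms are polarized, and a $\tens$ (resp.\ $\parr$) of two positive (resp.\ negative) formulas is polarized. So a mixed formula is a $\tens$ or $\parr$ whose premises do not share the corresponding polarity. Every node other than $\tens$ and $\parr$ has only atomic edges: axioms, contractions, weakenings and boxes all have atomic inputs. The one exception is a box output $Q^+$, which is positive, and the $\tens$-nodes building it are polarized. Hence a mixed edge can only be the conclusion of a $\tens$- or $\parr$-node.

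\emph{Step 2: the key claim, using normality.} We show that a mixed edge $e$ is never the premise of a $\cut$, nor of a node whose conclusion is polarized.
\begin{itemize}
\item \emph{Cuts.} Suppose $e$, of type $A$, is cut against $e'$ of type $A\orth$. Then $A\orth$ is also mixed, so $e'$ is likewise the conclusion of a $\parr$ or $\tens$-node, by Step 1. Since $A$ and $A\orth$ have dual main connectives, this is a $\tens/\parr$ redex, contradicting normality.
\item \emph{Other nodes.} The only nodes with non-atomic premises are $\tens$, $\parr$ and $\cut$. A $\tens$ or $\parr$ with a mixed premise has a mixed conclusion.
\end{itemize}
So mixed edges propagate downward along $\tens/\parr$-nodes until they reach a pending conclusion. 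Since $\somepn$ has a single conclusion $F$, every mixed edge lies on the downward path to $F$, i.e.\ it is a subformula occurrence in the formula tree of $F$.

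\emph{Step 3: defining the core.} Define $\pol{\somepn}$ as $\somepn$ with all mixed edges and the $\tens/\parr$-nodes concluding them removed. The premises of those nodes that are polarized become pending conclusions. These are exactly the maximal polarized subformulas of $F$, i.e.\ the leaves of the non-polarized formula tree.

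We then check that $\pol{\somepn}$ is a proof-net:
\begin{itemize}
\item \emph{No pending premises.} Removing nodes only creates pending conclusions, never pending premises. This relies on Step 2: a removed edge is never the premise of a retained node, because mixed edges only feed $\tens/\parr$-nodes that are themselves removed.
\item \emph{Correctness.} A cycle in a subgraph is a cycle in $\somepn$, and it still uses two premises of a same $\parr$ or $\cc$ node, which are still present.
\end{itemize}
All its edges are polarized, so $\pol{\somepn}$ is polarized. The removed part is, by construction, the formula tree of $F$ truncated at its polarized subformulas.

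\emph{Main obstacle.} The delicate point is Step 2's cut case. The argument needs to rule out a mixed formula being cut against the output of a box or an axiom. Step 1 handles this: boxes and axioms have only polarized (atomic or $Q^+$) conclusions, so the partner of a mixed cut edge really is a $\tens/\parr$ conclusion, giving a redex.

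A secondary subtlety arises if $F$ itself is polarized. Then no edge is mixed, and $\pol{\somepn} = \somepn$ with a trivial formula tree.
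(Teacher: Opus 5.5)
Your argument is correct. The paper states this proposition without proof, and you supply the natural argument it leaves implicit. A non-polarized edge can only be the conclusion of a $\tens$/$\parr$-node, since axioms are atomic and box outputs are $X^+$ or $Q^+$. Because $A$ is polarized iff $A\orth$ is, a $\cut$ on such an edge would be a $\tens$/$\parr$ redex. Hence non-polarized edges flow downward only through $\tens$/$\parr$-nodes into the unique conclusion, i.e. they form the syntax tree of $F$ truncated at its maximal polarized subformulas. Removing that part leaves a subgraph with no pending premises that inherits correctness.

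There are three harmless imprecisions:
\begin{itemize}
\item Step~1 omits the $\cut$-node from your list of non-$\tens$/$\parr$ nodes. Cuts do have non-atomic edges, but they have no conclusions, so your conclusion still holds.
\item A box emits $Q^+$ directly, not via $\tens$-nodes.
\item The downward propagation in Step~2 terminates because formula size strictly increases along $\tens$/$\parr$ conclusions; this is worth stating explicitly.
\end{itemize}
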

This generalizes in the obvious way to $\somepn \der \Delta$.
For example, consider the proof-nets of \cref{fig:modularity_pn}: each has a formula tree drawn with \textcolor{red}{dashed} edges below its polarized core with \textcolor{black}{solid} edges.

Putting everything together, we can associate a \qqbn\ to (the normal form of) any \good \qbpn.
In the next section, we prove they have the \emph{same semantics} (\Cref{th:sem_coherence}).

\subsection{Compositionality and Modularity}
\label{sec:PNcompositionality}

We have now built all the ingredients to satisfy the desiderata outlined in the introduction.

\subparagraph{Denotational Semantics.}

We denote by $\Nm{\somepn}$ the set of all names appearing in $\somepn$.
\begin{definition}[Semantics of a \qpn]
\label{def:sem}
Let $(\somepn,\Phi)$ be a \qbpn.
Its semantics is:
\begin{equation*}
\den{\somepn} \defeq \sum_{\Nm{\somepn}\difset\Nm{\Delta}} \left(\bigodot_{\bbox\in \Boxes{\somepn}} \Phi(\bbox)\right)
\quad\text{where $\Delta$ is the conclusions of $\somepn$.}
\end{equation*}
\end{definition}

\begin{remark}\label{ex:var_ax}
If $\somepn$ is such that $\Boxes{\somepn}=\emptyset$, then $\den{\somepn}$ is the trivial factor $\ftone$.
\end{remark}

The invariance of the semantics via reduction is immediate, because both the boxes and the conclusions are invariants of the reduction.

\begin{proposition}[Invariance]
Let $\somepn$ be a \qpn.
\begin{enumerate}
\item
If $\somepn \Red \somepn'$ then $\somepn'$ is a \qpn with $\den{\somepn}=\den{\somepn'}$.
\item
Let $\somepn'$ be the normal form of $\somepn$, then $\den{\somepn} = \den{\somepn'}= \den {\pol{\somepn'}}$.
\end{enumerate}
\end{proposition}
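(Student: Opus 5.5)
The plan is to observe that $\den{\cdot}$ depends only on two data: the multiset of boxes with their \qCPTs, $\{(\bbox,\Phi(\bbox))\}_{\bbox\in\Boxes\somepn}$, and the sets $\Nm{\somepn}$ and $\Nm{\Delta}$. It then suffices to show that each relevant transformation preserves these data. The second point also requires checking that $\pol{\somepn'}$ is itself a \qpn; we do that separately.

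For point 1, we inspect the two rules of \cref{fig:rewriting_pn}. Both the $\ax$--$\cut$ rule and the $\tens$--$\parr$ rule act only on $\ax$, $\cut$, $\tens$, $\parr$ nodes. They never touch boxes, $\cc$ or $\cw$ nodes, so $\Boxes{\somepn'}=\Boxes{\somepn}$ and $\Phi$ is inherited unchanged. By the previous proposition, reduction preserves correctness and the conclusions $\Delta$, so $\somepn'$ is a proof-net. Condition \ref{item:qbn:1} of \cref{def:qbn} concerns only box conclusions, so it still holds, and $\somepn'$ is a \qpn.

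It remains to check $\Nm{\somepn'}=\Nm{\somepn}$. The names occurring in a proof-net are exactly the names occurring in labels of its edges. Every atom label occurs either on a box conclusion, on an axiom, or inside a conclusion formula. The rules erase an axiom only when its two atom occurrences are reconnected to another edge of the same type, so no name disappears. The $\tens$--$\parr$ rule only splits compound labels into their subformulas. Hence the indexing set $\Nm{\somepn}\difset\Nm{\Delta}$ is unchanged, and $\den{\somepn}=\den{\somepn'}$ holds literally by \cref{def:sem}.

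For point 2, the first equality follows from point 1 by induction on the length of the terminating reduction sequence. For the second, by the Polarized core proposition, $\somepn'$ is $\pol{\somepn'}$ with the formula tree of $\Delta$ below it, made only of $\tens$/$\parr$ nodes. Removing that tree keeps all boxes, so $\Boxes{\pol{\somepn'}}=\Boxes{\somepn'}$ with the same $\Phi$. The conclusions $\Delta'$ of $\pol{\somepn'}$ are the leaves of the formula trees of $\Delta$, hence $\Nm{\Delta'}=\Nm{\Delta}$. Likewise $\Nm{\pol{\somepn'}}=\Nm{\somepn'}$, since the removed part introduces no atom that does not already label one of those leaves. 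Therefore both the product and the sum-out set coincide.

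The main obstacle is the careful name bookkeeping. One must show that no name occurring only in a removed edge can vanish or appear, so that the summed-out set is truly invariant. One must also treat the empty-box case, where by \cref{ex:var_ax} both sides equal $\ftone$. The edge-label argument above handles both.
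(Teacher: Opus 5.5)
Your proposal is correct and follows the paper's argument: the paper's proof is the one-line observation that the boxes (with their \qCPTs) and the conclusions are invariant under reduction, so $\den{\cdot}$ is literally the same expression. Your extra bookkeeping (that $\Nm{\somepn}$ is preserved by both rules and by stripping the formula tree, and that $\pol{\somepn'}$ keeps every box) makes explicit what the paper leaves implicit.
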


The crucial question is \emph{compositionality}.
We prove that the semantics of a \qbpn $\somepn$ can be defined compositionally for any decomposition of $\somepn$ in sub-graphs.

\begin{theorem}[Compositionality]
\label{th:compositional}
Let $(\somepn,\Phi)$ be a \qpn, decomposed into two sub-graphs $\somepn_1$ and $\somepn_2$ with the obvious \qCPTs assignments given by restriction of $\Phi$.
Then:
\begin{equation*}
\den{\somepn} = \sum_{(\Nm{\Delta_1} \cup \Nm{\Delta_2})\difset\Nm{\Delta}} \big(\den{\somepn_1} \odot \den{\somepn_2}\big)
\end{equation*}
where $\Delta$ (\resp\ $\Delta_1$, $\Delta_2$) is the conclusions of $\somepn$ (\resp\ $\somepn_1$, $\somepn_2$).
\end{theorem}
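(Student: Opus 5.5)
We unfold \Cref{def:sem} on both sides and reduce the equation to the algebraic laws of \Cref{lem:qfactors_op}. Write $B_i=\Boxes{\somepn_i}$, so that $\Boxes{\somepn}=B_1\uplus B_2$, and set $\Phi_i\defeq\bigodot_{\bbox\in B_i}\Phi(\bbox)$. By definition, $\den{\somepn_i}=\sum_{\Nm{\somepn_i}\difset\Nm{\Delta_i}}\Phi_i$. The right-hand side is then
\begin{equation*}
\sum_{(\Nm{\Delta_1}\cup\Nm{\Delta_2})\difset\Nm{\Delta}}\Big(\big(\textstyle\sum_{H_1}\Phi_1\big)\odot\big(\sum_{H_2}\Phi_2\big)\Big),
\qquad H_i\defeq\Nm{\somepn_i}\difset\Nm{\Delta_i}.
\end{equation*}

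The first step is to pull the inner sums out using point 4 of \Cref{lem:qfactors_op}. This needs every name in $H_1$ to be outside the scope of $\Phi_2$, and symmetrically. The scope of $\Phi_2$ is contained in $\Nm{\somepn_2}$. So consider a name $V\in H_1$ that also occurs in $\somepn_2$. It appears on edges of both sub-graphs, so some edge carrying it must be cut by the decomposition, and that edge becomes a pending conclusion of $\somepn_1$ (or a pending premise). Either way $V\in\Nm{\Delta_1}$, a contradiction.

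This is the step I expect to be delicate. It needs a precise reading of what "decomposed into two sub-graphs" means. Every edge joining $\somepn_1$ to $\somepn_2$ should appear as a conclusion of the sub-graph containing its source node. Conversely, every atom occurrence in a formula labelling an edge should be traceable, through the MLL nodes $\tens,\parr,\cc,\cut,\ax$, to occurrences in the same connected region. I would make this explicit with a short lemma: if a name occurs in both $\somepn_1$ and $\somepn_2$, it occurs in $\Delta_1$ and in $\Delta_2$. The proof goes by following the atom along its path through the net; for classical names this includes following contraction trees.

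With the inner sums pulled out, commutativity of sums (point 3) and their merging give
\begin{equation*}
\sum_{S}\big(\Phi_1\odot\Phi_2\big),\qquad S\defeq H_1\cup H_2\cup\big((\Nm{\Delta_1}\cup\Nm{\Delta_2})\difset\Nm{\Delta}\big).
\end{equation*}
Associativity (point 1) and commutativity of $\odot$ then yield $\Phi_1\odot\Phi_2=\bigodot_{\bbox\in\Boxes\somepn}\Phi(\bbox)$. The side condition of point 1 holds because of condition \ref{item:qbn:1} of \Cref{def:qbn}: each qubit is the output of at most one box and the input of at most one box, so it lies in the scope of at most two \qCPTs.

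It remains to check $S=\Nm{\somepn}\difset\Nm{\Delta}$. Since $\Nm{\somepn}=\Nm{\somepn_1}\cup\Nm{\somepn_2}$, we need two things. First, $S\cap\Nm{\Delta}=\emptyset$: a name in $H_1\cap\Nm{\Delta}$ would lie in $\Nm{\somepn_1}$ and on a conclusion of $\somepn$, and every conclusion of $\somepn$ is a conclusion of $\somepn_1$ or of $\somepn_2$. If it is a conclusion of $\somepn_1$, the name lies in $\Nm{\Delta_1}$; otherwise it lies in $\Nm{\somepn_2}$, and the sharing lemma above places it in $\Nm{\Delta_1}$ again. Second, every name of $\somepn$ outside $\Delta$ is in $S$, which is immediate from the case split. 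Summing over a name on which the factor does not depend must also be avoided or handled. The sharing lemma guarantees that names summed on the left occur in the product's scope, and names in $H_i$ do so by construction, which justifies the sum conventions.
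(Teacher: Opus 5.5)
Your route (unfold \Cref{def:sem}, pull the inner sums out with point~4 of \Cref{lem:qfactors_op}, merge them, reassociate, compare index sets) is the argument the paper defers to \cite{BPN,popl24}. Once your ``sharing lemma'' is granted, the bookkeeping is essentially right.

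The gap is that lemma. Following an atom through $\ax$, $\cut$, $\tens$, $\parr$ and contraction nodes reaches every occurrence of a name only if the edges carrying that name are linked to one another by edges carrying it. Nothing in \Cref{def:qbn} enforces this: names are global labels, constrained only at box outputs and at quantum box inputs. For a counterexample, let $\somepn_1$ be a box $\boxof{X^+}$ without inputs whose output is cut against a weakening $\cw$ on $X^-$. Let $\somepn_2$ be a box $\boxof{Z^+}$ whose input $X^-$ is cut against the $X^+$ of an $\ax$, the other conclusion $X^-$ of the $\ax$ being left pending. This is a \qbpn, and $X$ occurs in both parts, yet $\Delta_1$ is empty. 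With $\Phi(\boxof{X^+})=p$ and $\Phi(\boxof{Z^+})=\theta$ you get $\den{\somepn}(x,z)=p(x)\,\theta(z\mid x)$. The right-hand side of the theorem is instead $\big(\sum_{x'}p(x')\big)\theta(z\mid x)=\theta(z\mid x)$. The failure does not come from disconnectedness: give both boxes an extra input $W^-$, fed through a contraction by a box $\boxof{W^+}$ placed in $\somepn_2$, and you get a connected counterexample.

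So the lemma is not merely delicate: it is false, and so is the displayed equation, for \qbpns as defined. No path-following argument can close this step without an extra hypothesis, for instance that for each name the edges carrying it form a connected sub-graph. Under that hypothesis your argument works. A path of $V$-edges from $\somepn_1$ to $\somepn_2$ must use an edge joining them, so $V\in\Nm{\Delta_1}\cap\Nm{\Delta_2}$. This requires counting pending premises of $\somepn_i$ in $\Delta_i$, as you implicitly do; that reading is necessary, since otherwise separating box $X$ from the rest of the Bell net in \cref{ex:bell_pn} already fails.

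For closed nets, hence for the encodings $\somepn_{\bn}$, the hypothesis can be derived:
\begin{enumerate}
\item Reduction keeps boxes and conclusions, and such paths lift back along each reduction step.
\item The polarized core of the normal form is a DAG by \Cref{lem:pol_correct}.
\item In that DAG, tracing an occurrence of a name backwards must end at the unique box producing it, because closedness excludes pending negative atoms.
\end{enumerate}

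Finally, your claim that names in $H_i$ lie in the scope of $\Phi_i$ ``by construction'' is wrong. A qubit passed between two boxes of $\somepn_i$ disappears through the symmetric difference, and a name occurring only on axioms or weakenings lies in no scope. What you need is the convention $\sum_V\phi=\phi$ for $V$ outside the scope of $\phi$. \Cref{def:sem} already presupposes it, and points~3 and~4 remain valid under it.
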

The proof is similar to~\cite{BPN}, adapting to a graph setting the argument from~\cite{popl24}.

\subparagraph{Soundness and Completeness.}

In \Cref{sec:QBN_qpn}, we proved that the formalism of quantum proof-nets is sound and complete \wrt\ QBNs.
The semantics of a \qbpn\ and of the associated \qqbn\ are the same, essentially by definition.

\begin{theorem}[Completeness and Soundness with respect to QBNs]
\label{th:sem_coherence}
\hfill
\begin{itemize}
\item
To every \qqbn\ $\bn$ is associated a \good \qbpn $\somepn_\bn$ with $\sem{\bn} = \den{\somepn_{\bn}}$.
\item
To every \emph{\good} \qbpn\ $\somepn\der \Delta$ is associated a \qqbn\ $\bn_{\somepn}$ with $\den{\somepn} = \sum_{\Nm{\somepn} \difset \Nm{\Delta}}\sem {\bn_\somepn}$.
\end{itemize}
\end{theorem}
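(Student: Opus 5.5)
For the first item, I take $\somepn_\bn$ to be the typed graph $\Net_\G$ built in \cref{sec:QBN_qpn}. It is correct by \cref{lem:pol_correct}, and it is polarized and \good. Its boxes are in bijection with the nodes of $\G$, and the box of output $P$ with $\Nm P=V$ carries $\Phi(V)$. So $\bigodot_{\bbox}\Phi(\bbox)=\bigodot_V\Phi(V)=\sem{\bn}$, provided the two products are formed in compatible orders. I justify this by \cref{lem:qfactors_op}: commutativity always holds, and associativity holds because each qubit occurs in the scope of at most two \qCPTs, namely the one that produces it and at most one that consumes it. The latter follows from item~\ref{item:qbn:1} of \cref{def:qbn} together with the disjointness of the edge labels $\uplus_i\qreg_i=\qreg$.

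It remains to check that the sum in \cref{def:sem} does nothing. By construction $\Net_\G$ exposes every classical variable as a positive conclusion $X^+$: each classical output is duplicated by $\cc$/$\cw$ nodes to feed the children, and one copy is left pending. Quantum outputs are all cut against the consumers' inputs. If a quantum output is not consumed, I close it with an $\ax$ and a $\cw$ so that no q-atom remains pending. Hence $\Nm{\Delta}=\bX$. Moreover, $\sem{\bn}$ has empty quantum scope, since every qubit is produced once and consumed once (or traced when unconsumed), and the symmetric difference in \cref{def:prod} contracts it. Therefore $\Nm{\somepn}\difset\Nm{\Delta}$ contains only qubits that are no longer in the scope, and summing over names outside the scope is the identity. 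This gives $\den{\somepn_\bn}=\sem{\bn}$.

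For the second item, I let $\somepn'$ be the normal form of $\somepn$. By the Invariance proposition, $\den{\somepn}=\den{\somepn'}=\den{\pol{\somepn'}}$, and reduction preserves both the boxes and the conclusions $\Delta$. I take $\bn_\somepn$ to be the \qqbn associated to $\pol{\somepn'}$ by the lemma of \cref{sec:QBN_qpn}; this requires $\pol{\somepn'}$ to be \good, which I argue below. Its \qCPTs are exactly the $\Phi(\bbox)$, so $\sem{\bn_\somepn}=\bigodot_{\bbox}\Phi(\bbox)$, using the same associativity argument as before. Unfolding \cref{def:sem} then gives $\den{\somepn}=\sum_{\Nm{\somepn}\difset\Nm{\Delta}}\sem{\bn_\somepn}$, where the names are unchanged by reduction.

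The main obstacle is showing that $\pol{\somepn'}$ is \good. Its conclusions are the leaves of the formula trees of $\Delta$. Since $\Delta$ is \good, all these atoms are positive and classical, so the conditions of that lemma hold. I also need to check that the wiring inside the polarized core really induces edges of a \qqbn: every qubit input of a box must be matched, through an axiom or cut chain, with a unique box output, and no q-atom may be left pending. The latter holds because no negative or quantum atom occurs in $\Delta$.
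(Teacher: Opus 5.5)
Your proposal is correct and is essentially the paper's argument: encode $\bn$ as $\Net_\G$; conversely normalize, take the polarized core and apply the lemma of \cref{sec:QBN_qpn}; the two semantics then agree by unfolding \cref{def:sem}. Your associativity check and your pending $X^+$ copy of every classical variable (an $\ax$ whose negative side joins the $\cc$-tree) fill in details the paper leaves implicit. The pending copies are genuinely needed, since leaving $X^+$ and $Y^+$ cut as in the sketch of \cref{ex:bell_pn} would make \cref{def:sem} marginalize $X$ and $Y$, so the first equality would fail.

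One caveat comes from that lemma rather than from you. A $\cw$ on a q-atom can discard an output qubit of a box, so the out-edge labels of the induced node need not cover its register. Strictly, one must first trace such qubits out of the corresponding \qCPTs, dropping nodes whose register thereby becomes empty and tracing their input qubits in turn; this leaves $\sum_{\Nm{\somepn}\difset\Nm{\Delta}}$ of the product unchanged. Conversely, the unconsumed-qubit case you treat in the first item never arises, since $\uplus_i\qreg_i=\qreg$ forces every qubit of a \qqbn\ to be consumed.
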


%

\subparagraph{On Types and Modularity.}

The typing discipline enables modularity: two \qbpns of dual type $F$ and $F\orth$ are \emph{guaranteed} to compose well.
More generally, composition is defined by the standard $\cut$-rule of sequent calculus---we say that $\R_1,\R_2$ below have \definitive{compatibles types}:
\begin{equation*}
\infer[\cut]{\somepn\der \Gamma,\Delta}{\somepn_1 \der \Gamma, F & \somepn_2 \der \Delta, F\orth}
\end{equation*}
where $\somepn$ is obtained by connecting the conclusions (labelled by) $F$ and $F\orth$ with a $\cut$-node.

\begin{proposition}
\label{th:types}
If the \qbpns $\somepn_1$ and $\somepn_2$ have compatible types, their composition is a \qbpn.
Plugging together the DAGs respectively associated to $\somepn_1$ and $\somepn_2$ produces a DAG.
\end{proposition}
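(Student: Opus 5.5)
The plan is to prove the two claims of \Cref{th:types} separately: first that the cut-composition is a proof-net satisfying the side conditions of \Cref{def:qbn}, then that the induced graph on boxes is acyclic.

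\emph{The composition is a \qbpn.} Let $\somepn$ be obtained from $\somepn_1 \der \Gamma, F$ and $\somepn_2 \der \Delta, F\orth$ by a $\cut$-node on $F$ and $F\orth$. The typing conditions of \cref{fig:grammar_nodes_pn} hold locally, and no pending premise is created. Correctness holds by the standard sequent-calculus argument. Since $\somepn_1$ and $\somepn_2$ are disjoint graphs and the cut adds exactly one node joining one edge of each, any cycle of $\somepn$ lies entirely inside $\somepn_1$ or inside $\somepn_2$. A cycle crossing between the two components would have to pass through the cut-node twice, which is impossible in a cycle (a bridge lies on no cycle). Hence every cycle already uses two premises of a same $\parr$- or $\cc$-node by correctness of $\somepn_i$.

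The \qCPT assignment is $\Phi_1 \cup \Phi_2$. The disjointness conditions on box outputs and q-inputs (point \ref{item:qbn:1} of \Cref{def:qbn}) across $\somepn_1$ and $\somepn_2$ must come from the intended reading of ``compatible types''. I would state this explicitly as an assumption: names not occurring in $F$ are chosen distinct (up to renaming hidden names), and atoms shared through $F$ are dual, so an output in one part meets an input in the other. This is the delicate bookkeeping point.

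\emph{Acyclicity of the plugged DAG.} I would normalize $\somepn$. By the Proposition on reduction, $\Red$ preserves correctness and conclusions and terminates; boxes are untouched by reduction, so the normal form $\somepn'$ has the same boxes. By the Polarized Core proposition, $\somepn'$ is a polarized \pn $\pol{\somepn'}$ above the formula trees of $\Gamma,\Delta$. By \cref{lem:pol_correct}, orienting $\pol{\somepn'}$ by polarity gives a DAG, hence the induced graph $\G_{\somepn'}$ on box names is a DAG.

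It remains to identify $\G_{\somepn'}$ with the plugging of the DAGs of $\somepn_1$ and $\somepn_2$, that is, their union plus edges created by matching outputs of one part with inputs of the other through $F$. Reduction steps only replace an $\ax$/$\cut$ pair by a wire, or a $\tens$/$\parr$ cut by two cuts on subformulas. Each step preserves directed paths between boxes, in both directions, up to the identification induced by dual atoms along $F$. Iterating, the directed box-to-box paths of $\somepn'$ are exactly those obtained by concatenating paths of $\somepn_1$ and $\somepn_2$ through the matched atoms of $F$. This is the plugged graph, which is therefore acyclic. The same argument handles the non-polarized portions of $\somepn_i$ by first passing to their normal forms.

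The main obstacle is this last invariance. A $\parr$ in $F$ on one side joined with a $\tens$ on the other is exactly where \cref{fig:modular}'s cyclic plugging $\N_0$/$\N_2$ is excluded. The key point is that correctness, rather than mere name-matching, forbids such cycles. I would prove it by a case check on the two rewriting rules of \Cref{fig:rewriting_pn}.
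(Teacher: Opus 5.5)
The paper states \Cref{th:types} without proof. Your skeleton is the argument it leaves implicit: the new $\cut$ is a cut-vertex, so every cycle stays inside one $\somepn_i$; then normalization, the polarized core, and \cref{lem:pol_correct}. The correctness part is fine.

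You are right that point~\ref{item:qbn:1} of \Cref{def:qbn} does not follow from the types. However, your proposed hypothesis only constrains names outside $F$, and that is too weak. Take $\somepn_1\der X^+$ a single box of output $X^+$, and $\somepn_2\der X^-,X^+,X^+$ made of an $\ax$ on $X$ next to a box of output $X^+$. Cutting $X^+$ against the axiom's $X^-$ gives compatible types, yet the composite has two boxes of output $X^+$. The hypothesis to state is that the boxes of $\somepn_1$ and $\somepn_2$ jointly satisfy point~\ref{item:qbn:1}, renaming hidden names when the clash is on them.

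The step that fails as written is the path invariance you use to identify $\G_{\somepn'}$ with the plugging. The $\tens/\parr$ step preserves directed box-to-box paths in neither direction.
\begin{itemize}
\item It can delete paths. Let $F=X^+\tens Y^+$ be built from the outputs of two boxes, and $F\orth=X^-\parr Y^-$ from inputs of two boxes $\bbox_1,\bbox_2$. The polarized orientation has a path from $\boxof{X^+}$ through $\tens$, $\cut$, $\parr$ to $\bbox_2$, and the step deletes it.
\item It can create paths. For an unpolarized $F$ such as $A^-\parr C^+$ in \cref{fig:modularity_pn}, the edges labelled $F$ and $F\orth$ carry no orientation. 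The path from $\boxof{A^+}$ to $\boxof{B^+}$ only appears after the step.
\end{itemize}
So a case check on \cref{fig:rewriting_pn} cannot give what you claim.

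You do not need that invariance. Once $\somepn_1,\somepn_2$ are normal, the only $\tens/\parr$ redexes come from the new $\cut$, since axioms are atomic. Reducing them yields literally the plugging of $\pol{\somepn'_1}$ and $\pol{\somepn'_2}$: the two cores joined by $\cut$s on polarized sub-formulas of $F$ and $F\orth$, sitting above the formula trees of $\Gamma,\Delta$. Being a reduct of the correct composite, this graph is correct. Applying \cref{lem:pol_correct} to its polarized part then gives acyclicity directly. Your wire-fusion check is right for the remaining $\ax$/$\cut$ steps, and is only needed to match this graph with $\G_{\somepn'}$.

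Likewise, the $\N_0/\N_2$ cycle is not excluded by any rewriting step. $\N_2$ is simply not a proof-net of type $(A\lollipop C)\lollipop D$ (\cref{fig:modularity_pn_bad}). Acyclicity always comes from correctness of the composite.
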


\cref{fig:modularity_pn} revisits the motivational example of \Cref{fig:modular} in our typed setting. \cref{fig:modularity_pn_bad} presents a non-example of typing for $\N_3$ in \Cref{fig:modular}: this graph is not a proof-net due to the highlighted cycle.
Types enable composition, by acting as an interface which encodes (via $\otimes$ and $\parr$) the DAG's structure.

\begin{figure}
\begin{minipage}[b]{0.60\linewidth}
\begin{adjustbox}{}
\begin{tikzpicture}[baseline=(base)]
\coordinate (base) at (0,-2.6);
\node at (1.5,0) {$\somepn_0$};
\begin{scope}[every node/.style={rectangle,draw=none}, every path/.style={draw=black}]
	\node[draw=black,minimum width=3em] (bB) at (0,0) {$B$};
	\node[draw=black,minimum width=3em] (bC) at (3,0) {$C$};

	\node (cB) at (1.5,-.75) {$\cut$};
	\path (0.3,-.25) |- node[named edge,pos=.5,above right]{$B^+$} (cB) -| node[named edge,pos=.5,above left]{$B^-$} (2.6,-.25);
	\path (-.2,-.25) -- node[named edge,pos=.75,left]{$A^-$} ++(0,-.5);
	\path (3.3,-.25) -- node[named edge,pos=.75,right]{$C^+$} ++(0,-.5);
\end{scope}
\begin{scope}[every node/.style={rectangle,draw=none}, every path/.style={draw=red,dashed}]
	\node (parr) at (1.5,-1.5) {\color{red}$\parr$};
	\path[out=-90,in=180] (-.2,-.8) edge (parr);
	\path[out=-90,in=0] (3.3,-.8) edge (parr);
	\path (parr) -- node[named edge,pos=.75]{\color{red}$A^+\lollipop C^+$} ++(0,-.5);
\end{scope}
\end{tikzpicture}
\qquad
\begin{tikzpicture}
\node at (6.5,-.5) {$\somepn_1$};
\begin{scope}[every node/.style={rectangle,draw=none}, every path/.style={draw=black}]
	\node[draw=black,minimum width=3em] (bE) at (5,-.5) {$E$};
	\node[draw=black,minimum width=3em] (bD) at (10,-.5) {$D$};
	\node[draw=black,minimum width=3em] (bA) at (8,-.5) {$A$};

	\node (cE) at (6.5,-1.5) {$\cut$};
	\node (pE) at (8.9,-1) {$\cc$};
	\path (5.3,-.75) |- node[named edge,pos=.5,above right]{$E^+$} (cE) -| node[named edge,pos=.2,above left]{$E^-$} (pE);
	\path[out=180,in=-90] (pE) edge (7.8,-.75);
	\path[out=0,in=-90] (pE) edge (9.6,-.75);
	\path (9.8,-.75) -- node[named edge,pos=1,left]{$C^-$} ++(0,-1.25);
	\path (10.3,-.75) -- node[named edge,pos=.9]{$D^+$} ++(0,-2.25);
	\path (8.3,-.75) -- node[named edge,pos=1]{$A^+$} ++(0,-1.25);
\end{scope}
\begin{scope}[every node/.style={rectangle,draw=none}, every path/.style={draw=red,dashed}]
	\node (parr) at (9,-2.5) {\color{red}$\tens$};
	\path[out=-90,in=180] (8.3,-2) edge (parr);
	\path[out=-90,in=0] (9.8,-2) edge (parr);
	\path (parr) -- node[named edge,pos=.75,left]{\color{red}$(A^+\lollipop C^+)\orth$} ++(0,-.5);
\end{scope}
\end{tikzpicture}
\end{adjustbox}
\caption{Modularity through typing.}
\label{fig:modularity_pn}
\end{minipage}
\hfill
\begin{minipage}[b]{0.38\linewidth}
\begin{adjustbox}{}
\begin{tikzpicture}
\node at (6.5,-6.5) {$\somepn_2$};
\begin{scope}[every node/.style={rectangle,draw=none}, every path/.style={draw=black}]
	\node[draw=black,minimum width=3em] (bE) at (5,-6.5) {$E$};
	\node[draw=black,minimum width=3em] (bD) at (8,-6.5) {$D$};
	\node[draw=black,minimum width=3em] (bA) at (10,-6.5) {$A$};

	\node (cE) at (6.5,-7.5) {$\cut$};
	\node (pE) at (8.9,-7) {$\cc$};
	\node (cD) at (9,-8) {$\cut$};
	\node (pD) at (10,-7.5) {$\cc$};
	\node (axD) at (11,-7) {$\ax$};
\end{scope}
\begin{scope}[every node/.style={rectangle,draw=none}, every path/.style={draw=red,dashed}]
	\node (parr) at (9,-8.5) {\color{red}$\tens$};
\end{scope}
\begin{scope}[every path/.style={thick,draw=yellow,-,line width=4pt}]
	\path[out=180,in=-90] (pD) edge (9.8,-6.75);
	\path (8.3,-6.75) |- (cD) -| (pD);
	\path (7.6,-6.75) -- ++(0,-1.25);
	\path (10.3,-6.75) -- ++(0,-1.25);
	\path[out=-90,in=180] (7.6,-8) edge (parr);
	\path[out=-90,in=0] (10.3,-8) edge (parr);
\end{scope}
\begin{scope}[every node/.style={rectangle,draw=none}, every path/.style={draw=black}]
	\path (5.3,-6.75) |- node[named edge,pos=.5,above right]{$E^+$} (cE) -| node[named edge,pos=.2,above left]{$E^-$} (pE);
	\path[out=180,in=-90] (pE) edge (7.8,-6.75);
	\path[out=0,in=-90] (pE) edge (9.6,-6.75);
	\path (8.3,-6.75) |- node[named edge,pos=.5,above right]{$D^+$} (cD) -| node[named edge,pos=.5,above left]{$D^-$} (pD);
	\path[out=180,in=-90] (pD) edge (9.8,-6.75);
	\path[out=0,in=-90] (pD) edge (10.5,-7);
	\path (axD) -| (10.5,-7);
	\path (axD) -| node[named edge,pos=.95]{$D^+$} (11.5,-9);
	\path (7.6,-6.75) -- node[named edge,pos=1,left]{$C^-$} ++(0,-1.25);
	\path (10.3,-6.75) -- node[named edge,pos=1,right]{$A^+$} ++(0,-1.25);
\end{scope}
\begin{scope}[every node/.style={rectangle,draw=none}, every path/.style={draw=red,dashed}]
	\path[out=-90,in=180] (7.6,-8) edge (parr);
	\path[out=-90,in=0] (10.3,-8) edge (parr);
	\path (parr) -- node[named edge,pos=.75,left]{\color{red}$(A^+\lollipop C^+)\orth$} ++(0,-.5);
\end{scope}
\end{tikzpicture}
\end{adjustbox}
\caption{Non-example of proof-net.}
\label{fig:modularity_pn_bad}
\end{minipage}
\end{figure}

%% file: 05_Conclusions.tex
\section{Conclusion and Related Work}

This paper brings compositional principles and a typing discipline in the setting of \QBNs, addressing the lack of compositionality and modularity in the original setting with methods from denotational semantics and proof theory.
Our framework is fully compatible with classical Bayesian networks and Bayesian inference.
The crucial notion is that of \emph{quantum factors}, that directly generalize factors from the theory of (classical) Bayesian networks and satisfy the same key properties for product and sum.
\begin{center}
\begin{tabular}{|c|c|}
\hline
Classical factor over \rvs $\bX$ & Quantum factor over \rvs $\bX$ and the quantum register $\qreg$
\\
\hline
$\phi:\setofrv{\bX} \to \nnReal$	& $\someqfact: \setofrv{\bX} \to \Pos{\HH_\qreg}$
\\
\hline
\end{tabular}
\end{center}
Their development is highly non-trivial, and is our main technical contribution.
A major challenge has been to conciliate the behaviour of classical factors with quantum no-cloning.
Our product of \qfactor s shares the value of classical variables (as factors do) and coincides with tensor contraction for quantum registers, being fully compatible with both settings.

The framework enables computing the probability distribution associated to a network by multiplying its \qfactor s in any arbitrary order (not necessarily in a top-down fashion as in~\cite{GBN2014}), and by summing out irrelevant variables on sub-components.

Furthermore, we propose a typed formalism for \QBN s---based on proof-nets---that guarantees \emph{modularity} through a proof-theoretic approach.
The \emph{exact} correspondence between closed proof-nets and QBNs is guaranteed by soundness and completeness results.
Finally, it is worth noting that \qfactor s are a model of Multiplicative Linear Logic.

\subparagraph{Related work.}

Our semantics is strongly inspired by Selinger's~\cite{Selinger2004}---the graph of the function defining a \emph{quantum factor} can be seen as an \emph{indexed tuple of matrices}, closely related to the tuples of matrices in~\cite{Selinger2004}.
Another notable denotational model built over density operators and quantum operations is~\cite{HasuoH17}.
Our typed graphical formalism based on linear logic exploits and expands a recent line of work relating Bayesian networks with \pns~\cite{EhrhardFP23,BPN} and with the information flow on type derivations~\cite{popl24}. 

As discussed in the introduction, the literature on quantum causal models (see \eg~\cite{BarrettLorenzOreshkov2019} and references therein) is vast and multifaceted.
We favor the \emph{inference} perspective propounded in~\cite{LeiferSpekkens2013}, a work which is preliminary to QBNs~\cite{GBN2014}.
While we only cite~\cite{LeiferSpekkens2013}, a substantial body of work studies quantum extension of Bayes' rule and Bayesian inference in quantum mechanics and quantum foundations.
This is however beyond the scope of our work.

Our goal is a \emph{conservative} extension of Bayesian networks to the quantum setting.
An early definition of quantum Bayesian networks---quite different from the one we consider---is given by Tucci~\cite{Tucci1995}, where amplitudes replace probabilities.
We focus on the work by Henson, Lal, and Pusey~\cite{GBN2014}, that introduces a notion of \emph{generalized Bayesian networks} including the quantum case; recent work in the same line is~\cite{KhannaAnsanelliPuseyWolfe2024}.
A closely connected formalism is that of~\cite{Fritz2012}, which focuses on (quantum) non-locality in scenarios with several quantum sources.
The results in~\cite{GBN2014} exploit also the framework of \emph{quantum networks} developed in~\cite{10.1103/PhysRevA.80.022339}.
It should be noted that \emph{quantum networks} and \emph{quantum Bayesian networks}, despite their similar names, are distinct frameworks with different purposes.
In particular, quantum networks cannot encode BNs: a classical operational-probabilistic theory in~\cite{10.1103/PhysRevA.80.022339} is not a BN---see the discussion in~\cite[Section~3.1 and Example~10]{GBN2014}.

Taking a broader perspective beyond \QBNs, we briefly mention some active research areas which share related aims. 
The development of a higher-order typed framework for quantum processes~\cite{10.1098/rspa.2018.0706,10.22331/q-2026-01-21-1978} is a timely and vibrant research line at the frontier between quantum physics and higher-order computation, exploring quantum operations on quantum operations.
Another very active line of research, concerned with compositionality and higher-order structures in both quantum and probabilistic settings, is that of string diagrams, adopting a categorical approach---see \eg~\cite{10.23638/LMCS-15(3:15)2019,10.4230/LIPIcs.MFCS.2022.80, simmons2024completelogiccausalconsistency}.
This fruitful line has a perspective different from ours.
We noted already that BNs have a dual nature, as tools for \emph{efficient probabilistic inference} and as \emph{causal models}.
While we favor the former, the line of developments based on monoidal category theory focuses on causality and is unconcerned with the computational cost of probabilistic reasoning.
As observed in~\cite[Example~9.7]{popl24}, the cost of \emph{actually} computing the semantics of a BN explodes when taking this approach, because a central notion is a product $\otimes$ which behaves like the tensor product of matrices.
We give an example in \cref{app:cost}. 

Our proof-net syntax provides an \emph{exact} characterization of QBNs, setting our work apart from other frameworks.
In the last two decades, several papers have developed variants of proof-nets accounting for quantum processes~\cite{DuncanThesis,AbramskyDuncan2009GeneralisedProofNets,esop2014,LagoFVY17,simmons2024completelogiccausalconsistency}; here we briefly discuss~\cite{simmons2024completelogiccausalconsistency} which introduces proof-nets as a tool to analyze \emph{causal} structures.
A key ingredient, which is also key in~\cite{popl24} and in our own treatment, is the (polarized) flow of information.
The framework in~\cite{simmons2024completelogiccausalconsistency} focuses on soundness and completeness (completeness being non‑trivial) \wrt\ causal string diagrams, building on Retoré’s proof-nets: causality is encoded via a non‑commutative tensor.
In contrast, we present a class of proof‑nets that is sound and complete (soundness being non‑trivial) \wrt\ QBNs.
When all data are classical, we recover BNs and~\cite{BPN}, where standard inference algorithms (\eg\ Variable Elimination, Message Passing) are available.

%% file: 99_Appendix.tex
\appendix

\section{Factors product vs Tensor product}
\label{app:cost}\label{app:section2}


We stress how much the product of factors in the theory of Bayesian networks (noted $\odot$ in our paper) differs from the tensor product of matrices. 
Consider the following BN (corresponding to the term in~\cite[Example 9.7]{popl24}), whose semantics is $\Pr(X_1, X_2,Y_1,Y_2,Y_3)$, a table with $2^5$ entries.

\begin{adjustbox}{width=0.7\linewidth, center=\linewidth}
\begin{tikzpicture}
\begin{genscope}
\node (Y1) at (2,0) {$Y_1$};
\node (Y2) at (6,0) {$Y_2$};
\node (Y3) at (10,0) {$Y_3$};
\node (X1) at (4,-2) {$X_1$};
\node (X2) at (8,-2) {$X_2$};
\draw[->] (Y1) edge (X1);
\draw[->] (Y1) edge (X2);
\draw[->] (Y2) edge (X1);
\draw[->] (Y2) edge (X2);
\draw[->] (Y3) edge (X1);
\draw[->] (Y3) edge (X2);
\end{genscope}
\end{tikzpicture}
\end{adjustbox}
The semantics $\den{X_1}$ of the node $X_1$ is $\Pr(X_1 \mid Y_1,Y_2,Y_3)$ (a factor $\phi_1(x_1,y_1,y_2,y_3)$), which can be seen as a stochastic matrix of size $2^4$.
Similarly, the semantics $\den{X_2}$ of the node $X_2$ is $\Pr(X_2 \mid Y_1,Y_2,Y_3)$ and can be seen as a stochastic matrix with $2^4$ entries.

By using the respective definitions, it is immediate to check that:
\begin{itemize}
\item
The tensor product $\den {X_1} \otimes \den {X_2} $ yields a table with $2^8$ entries.
\item
The factors product $\den {X_1} \odot \den {X_2} $ yields a table with $2^5$ entries (because non compatible instantiations are never computed).
\end{itemize}

\subparagraph{On the computational cost of the semantical interpretation: $\odot$ vs $\otimes$.}

\emph{Inference} algorithms on BNs aim at computing the semantics of the model through intermediate, partial computations, \emph{without ever computing the full joint probability} (as unfeasible in practice).

The approaches based on monoidal categories typically focus on \emph{causality}, not on the cost of inference.
A central role in the semantics is here played by a product $\otimes$ behaving as the tensor product of matrices.
Computing the semantics of $n$ binary random variables easily leads to intermediate computations of size much larger than $2^n$, the size of the full joint distribution.
Looking again at the BN above, consider the sub-net containing only the nodes $X_1$ and $X_2$.
Its factor-based semantics is $\den{X_1} \odot \den{X_2}$.
In a categorical setting, one computes $\den{X_1} \otimes \den{X_2}$, which is \emph{larger} than the full joint probability distribution (the worst-case scenario in BN's inference).
The point is that \emph{inference is not the focus} there.


\section{Proof of Proposition~\texorpdfstring{\ref{lem:qfactors_op}}{18}}
\label{app:section3}

\qfactors*
\begin{proof}
We only prove closeness, the rest being simple computations for each claimed equation.
\begin{itemize}
\item
Consider $\someqfact_1\compfact \someqfact_2$ with $\someqfact_i$ a \qfactor\ over $(\bX_i,\qreg_i)$.
We will use the completely positive map $\qcup[\L(\HH_{\qreg})]:\L(\HH_{\qreg})\tens\L(\HH_{\qreg})\to\C$ defined on the canonical orthonormal basis by
\begin{equation*}
\qcup[\L(\HH_{\qreg})](\be \tens \be') \defeq 1 \text{ if } \be=\be' \text{ and } 0 \text{ otherwise}
\end{equation*}
and extended to the general case by linearity.
Fix $\bx\in\setofrv{\bX_1\cup\bX_2}$, with $\bx_i$ its restriction to $\setofrv{\bX_i}$.
Then, $\someqfact_1(\bx_1)$ is a positive matrix in $\L(\HH_{\qreg_1})\cong\L(\HH_{\qreg_1\difset\qreg_2})\tens\L(\HH_{\qreg_1\cap\qreg_2})$ and $\someqfact_2(\bx_2)$ is a positive matrix in $\L(\HH_{\qreg_2})\cong\L(\HH_{\qreg_1\cap\qreg_2})\tens\L(\HH_{\qreg_2\difset\qreg_1})$.
We want positivity of $(\someqfact_1 \compfact \someqfact_2) (\bx) \in \L(\HH_{\qreg_1\symdif\qreg_2})\cong\L(\HH_{\qreg_1\difset\qreg_2})\tens\L(\HH_{\qreg_2\difset\qreg_1})$.
We remark that:
\begin{equation*}
(\someqfact_1 \compfact \someqfact_2) (\bx)
=
\left(\id[\L(\HH_{\qreg_1\difset\qreg_2})] \tens \qcup[\L(\HH_{\qreg_1\cap\qreg_2})] \tens \id[\L(\HH_{\qreg_2\difset\qreg_1})]\right)
\left(\someqfact_1(\bx_1) \tens \someqfact_2(\bx_2)\right)
\end{equation*}
As $\someqfact_1(\bx_1) \tens \someqfact_2(\bx_2)$ is positive and $\qcup$ is completely positive, $(\someqfact_1 \compfact \someqfact_2) (\bx)$ is positive.
\item
Consider a \qfactor\ $\someqfact$ over $(\bY\uplus\{X\},\qreg)$.
Fixing $\by\in\setofrv{\bY}$, as each $\someqfact(\by,x)$ is a positive matrix, it follows that their sum $(\sum_X \someqfact) (\by) = \sum_{x\in\setofrv{X}}\someqfact(\by,x)$ is positive too.
\item
Consider a \qfactor\ $\someqfact$ over $(\bX,\qreg\uplus\qreg')$.
For any $\bx\in\setofrv{\bX}$, $(\sum_{\qreg'} \someqfact) (\bx) = \ptr{\HH_{\qreg'}}\left(\someqfact(\bx)\right)$ is positive since $\someqfact(\bx)$ is positive and the (partial) trace is completely positive.
\qedhere
\end{itemize}
\end{proof}

\section{Around \QCPT s}
\label{app:equivalence_qbns}

Notice that our semantics of quantum Bayesian networks does not distinguish inputs and outputs of nodes, similarly to the factors semantics or to the relational models of Linear Logic.
For instance, the interpretation of a node with a quantum input $\qreg_1$, a quantum output $\qreg_2$ and a classical output $X$ is a function from $\setofrv{X}$ to positive matrices in $\Pos{\HH_{\qreg_1} \tens \HH_{\qreg_2}}$.

\subsection{Intuitions for \QCPT s}
\label{sec:QCPT}

We give some intuition for \cref{def:condition_qfactor_qbn}.
It is straightforward to associate a \qfactor\ to a quantum instrument (we do so in \cref{app:equivalence_qbns_proof}).
The reverse is not always possible, and \qCPTs are those \qfactor s corresponding to quantum instruments.

Consider a \qfactor\ $\someqfact$ over $(\{X\}\uplus\bY,\qreg)$.
We want it to be the interpretation of a node $X$ in a quantum Bayesian network: given a density matrix $\M\in\D(\HH_\qreg)$ and a value $\by\in\setofrv{\bY}$, we want a probability distribution on $X$.
A natural way to do so is by seeing $\M$ as a \qfactor\ over $(\emptyset,\qreg)$ (\ie\ the function sending the unique element of a singleton to $\M$): then, $\someqfact\compfact\M$ is a \qfactor\ over $(\{X\}\cup\bY,\emptyset)$, \ie\ each $(\someqfact\compfact\M)(x,\by)$ is in $\nnReal$.
Therefore, we say that $\someqfact$ is a \qCPT\ for $X$ given $(\bY, \qreg)$ if for all $\M\in\D(\HH_\qreg)$ and $\by\in\setofrv{\bY}$, $\sum_{x\in\setofrv{X}}(\someqfact\compfact\M)(x,\by) = 1$.

Similarly, consider a \qfactor\ $\someqfact$ over $(\bX,\qreg\uplus\qreg')$ that we want to associate to a node $\qreg$ of a network.
Hence, for every density matrix $\M\in\D(\HH_{\qreg'})$ and value $\bx\in\setofrv{\bX}$, we should get a density matrix in $\D(\HH_{\qreg})$.
As before, seeing $\M$ as a \qfactor\ over $(\emptyset,\qreg')$, let us look at the \qfactor\ $\someqfact\compfact\M$ over $(\bX,\qreg)$.
For $\someqfact$ to be a \qCPT\ for $\qreg$ given $(\bX,\qreg')$, we require that for all $\M\in\L(\HH_{\qreg'})$ and $\bx\in\setofrv{\bX}$, the positive matrix $(\someqfact\compfact\M)(\bx)$ has trace $1$.

\Cref{def:condition_qfactor_qbn} is simply a reformulation of the two conditions found above.

\begin{remark}
While complete positivity is relevant for instruments, for \qfactor s we only need to consider positive matrices.
This is because given $\Eop:\L(\C)\to\L(\HH)$---which is isomorphic to a matrix in $\L(\HH)$---complete positivity collapses to positivity.
\end{remark}

\subsection{Equivalence between Instrument- and \Qfactor-based \qqbn s}
\label{app:equivalence_qbns_proof}

We prove here a claim from \cref{sec:qbn}: our definition of quantum Bayesian networks with \qfactor s is equivalent to the one with quantum instruments from~\cite{GBN2014}, sketched in \cref{sec:qbn_instrument}.

\begin{proposition}
\label{lem:bij_2_QBNs}
There is a one-to-one correspondence between instrument- and \qfactor-based Bayesian networks, which preserves the semantics.
\end{proposition}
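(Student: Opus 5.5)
The correspondence is built node by node; the DAG stays fixed, and only the data attached to each node changes. Consider a node $V$ whose in-edges carry the classical variables $\bY$ and the qubits $\qreg_{in}$. If $V$ is a register $\qreg$, the classical outcome set is trivial. If $V$ is a random variable $X$, the quantum output is trivial. In either case the instrument-based network of~\cite{GBN2014} attaches to $V$ a family $\{\{\Eop_{v,\by}\}_{v}\}_{\by}$. Here $v$ ranges over $\setofrv{X}$ (or a singleton), and each $\Eop_{v,\by}:\L(\HH_{\qreg_{in}})\to\L(\HH_{\qreg_{out}})$ is completely positive, with $\sum_v\Eop_{v,\by}$ trace preserving. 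We send this family to the \qfactor\ $\someqfact^V(v,\by)\defeq\choiiso(\Eop_{v,\by})\in\L(\HH_{\qreg_{in}}\tens\HH_{\qreg_{out}})$, using the Choi matrix in the convention $\choiiso(\Eop)=\sum_{e\in\baseof{\qreg_{in}}} e\tens\Eop(e)$.

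First we check that this is a \qCPT. By the properties of $\choiiso$ recalled in \cref{sec:choi_iso}, complete positivity of $\Eop_{v,\by}$ is equivalent to positivity of $\choiiso(\Eop_{v,\by})$. So $\someqfact^V$ is a \qfactor. By linearity of $\choiiso$, $\sum_v\choiiso(\Eop_{v,\by})=\choiiso(\sum_v\Eop_{v,\by})$. Trace preservation of $\sum_v\Eop_{v,\by}$ is equivalent to the identity $\ptr{\HH_{\qreg_{out}}}(\choiiso(\cdot))=\id[\L(\HH_{\qreg_{in}})]$. Combining this with the sum-out over $X$ (classical case) or the partial trace over $\qreg_{out}$ (quantum case) gives exactly the two clauses of \cref{def:condition_qfactor_qbn}. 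The map is bijective because $\choiiso$ is a linear isomorphism, and each of these equivalences holds in both directions. Its inverse is $\Eop_{v,\by}(\rho)=\ptr{\HH_{\qreg_{in}}}((\rho^{T}\tens\id)\someqfact^V(v,\by))$. Hence every \qCPT\ comes from a unique family of instruments, and node-wise we obtain a one-to-one correspondence between the two kinds of networks over the same DAG.

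It remains to show that the semantics is preserved, and this is the main obstacle. The semantics of~\cite{GBN2014} composes the instruments along a topological order of the DAG: the maps are tensored with identities and composed, and the leftover quantum systems are traced out, with the classical indices shared. Our semantics is $\bigodot_V\someqfact^V$. The key lemma is the standard link-product property of Choi matrices: for $\Eop:\L(\HH_1)\to\L(\HH_2)$ and $\mathcal F:\L(\HH_2)\to\L(\HH_3)$, contracting $\choiiso(\Eop)$ and $\choiiso(\mathcal F)$ over the shared basis of $\HH_2$, as in \cref{def:prod}, yields $\choiiso(\mathcal F\circ\Eop)$, up to the transpose convention. We must fix that convention once and check that $\qcup$ in the proof of \cref{lem:qfactors_op} matches it. Tensoring maps on disjoint registers corresponds to $\odot$ on disjoint scopes. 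The classical indices behave correctly because $\odot$ shares values exactly as instruments indexed by $\by$ are selected by those values.

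We then argue by induction on a topological enumeration $V_1,\dots,V_n$. The instrument semantics of the first $k$ nodes, viewed as an operator on the still-open output registers, equals $\bigodot_{i\le k}\someqfact^{V_i}$; associativity is available here because, as noted after \cref{lem:qfactors_op}, no qubit lies in three scopes. At the end all registers have been consumed, because every output qubit is the input of exactly one node or is traced out by its own \qCPT\ convention. Both sides are then functions $\setofrv{\bX}\to\nnReal$, and they coincide. Handling the transposes and the partial traces of unconsumed qubits carefully in this inductive step is where most of the care is needed.
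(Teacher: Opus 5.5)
Your proposal is correct and follows the paper's route: your $\choiiso(\Eop)=\sum_e e\tens\Eop(e)$ is exactly the paper's $\Psi(\Eop)=(\id\tens\Eop)(\qcap)$, and your topological-order induction spells out what the paper only calls a ``simple computation using the yanking equations''. With your convention the composition step holds with no transposes: \cref{def:prod} pairs equal basis indices exactly as $\qcup$ does, which gives $\rho\compfact\choiiso(\Eop)=(\id\tens\Eop)(\rho)$. The only inaccuracy is the clause ``or is traced out by its own \qCPT\ convention'', which never applies. By the labelling condition $\uplus_i\qreg_i=\qreg$ with $\qreg$ non-empty, every output qubit lies in exactly two scopes, those of its producer and its unique consumer, so all qubits cancel in $\bigodot$ and nothing is ever traced out.
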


As explained in \cref{sec:choi_iso}, the two syntaxes consider the same underlying DAGs (up to inconsequential details).
What we have to do to obtain \Cref{lem:bij_2_QBNs} is:
\begin{itemize}
\item
Giving a bijection $\Psi$ between a \qCPT associated to a node in our presentation, and instruments associated to the same node in~\cite{GBN2014}.
\item
Proving that composition of instruments corresponds to product of their images by $\Psi$.
\end{itemize}

This bijection $\Psi$ is a simple adaptation of the Choi-Jamiolkowski isomorphism to a setting with classical inputs.
This builds upon the fact that \qCPTs are defined similarly to Choi states.
We first define $\Psi$, for nodes associated to a \rv\ and nodes associated to a register.

\begin{lemma}
\label{lem:qcpt_qi}
\hfill
\begin{itemize}
\item
Let $X$ be a \rv, $\bY$ a set of \rvs, and $\qreg$ a register.
There is a bijection between families of quantum instruments $\{\Eop_\by \mid \by\in\setofrv{\bY}\}$ where each $\Eop_\by$ is an instrument $\{\Eop_{x,\by} \mid x\in\setofrv{X}\}$ from $\L(\HH_{\qreg'})$ to $\C$, and \qCPTs on $X$ given $(\bY,\qreg')$.
\item
Let $\bX$ be a set of \rvs, and $\qreg$ and $\qreg'$ two registers.
There is a bijection between families of quantum operations $\Eop_\bx$ from $\L(\HH_{\qreg'})$ to $\L(\HH_\qreg)$, and \qCPTs on $\qreg$ given $(\bX,\qreg')$.
\end{itemize}
\end{lemma}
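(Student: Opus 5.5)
Both bijections will be built from the Choi--Jamiolkowski isomorphism $\choiiso$, applied pointwise in the classical values and composed with the bijection between \qfactor s and their matrix-valued functions. Throughout I use the following convention. For a linear map $\Eop:\L(\HH_{\qreg'})\to\L(\HH_{\qreg})$ I set
\begin{equation*}
\choiiso(\Eop)\defeq\sum_{\be\in\baseof{\qreg'}} \be\tens\Eop(\be)\in\L(\HH_{\qreg'}\tens\HH_{\qreg}),
\end{equation*}
where the input factor may need to be $\be$ or its transpose to match the paper's contraction convention. The two facts recalled in \cref{sec:choi_iso} are then taken for granted: $\Eop$ is completely positive if and only if $\choiiso(\Eop)$ is positive, and $\Eop$ is trace preserving if and only if $\ptr{\HH_{\qreg}}(\choiiso(\Eop))=\id[\L(\HH_{\qreg'})]$. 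By linearity, more generally
\begin{equation*}
\ptr{\HH_{\qreg}}(\choiiso(\Eop))=\sum_{\be}\tr{\Eop(\be)}\,\be,
\end{equation*}
which is the identity exactly when $\tr{\Eop(\rho)}=\tr{\rho}$ for all $\rho$.

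\textbf{Second item (quantum output).} Given a family $\{\Eop_\bx\}_{\bx}$ of quantum operations, define the \qfactor\ $\phi(\bx)\defeq\choiiso(\Eop_\bx)$ over $(\bX,\qreg\uplus\qreg')$. Positivity of each $\phi(\bx)$ comes from complete positivity of $\Eop_\bx$, so $\phi$ is indeed a \qfactor. The \qCPT\ condition of \cref{def:condition_qfactor_qbn} for $\qreg$ given $(\bX,\qreg')$ reads $(\sum_{\qreg}\phi)(\bx)=\ptr{\HH_\qreg}(\choiiso(\Eop_\bx))=\id$, and this is exactly trace preservation. The inverse sends $\phi$ to the family $\choiiso^{-1}(\phi(\bx))$. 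Since $\choiiso$ is a linear isomorphism, the two constructions are mutually inverse, and the characterisations above show that each lands in the right class.

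\textbf{First item (classical output).} Here the map is $\Eop_{x,\by}:\L(\HH_{\qreg'})\to\C$, so $\choiiso(\Eop_{x,\by})\in\L(\HH_{\qreg'})$. Define $\phi(x,\by)\defeq\choiiso(\Eop_{x,\by})$. Each $\Eop_{x,\by}$ is completely positive, so each $\phi(x,\by)$ is positive and $\phi$ is a \qfactor\ over $(\{X\}\uplus\bY,\qreg')$. Linearity of $\choiiso$ gives
\begin{equation*}
\Big(\sum_X\phi\Big)(\by)=\choiiso\Big(\sum_x\Eop_{x,\by}\Big).
\end{equation*}
By the trace-preservation criterion with codomain $\C$, where the partial trace over the trivial space is the identity, this equals $\id[\L(\HH_{\qreg'})]$ if and only if $\sum_x\Eop_{x,\by}$ is trace preserving. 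That is precisely the condition for $\{\Eop_{x,\by}\}_x$ to be an instrument. Conversely, a \qCPT\ $\phi$ yields the maps $\Eop_{x,\by}\defeq\choiiso^{-1}(\phi(x,\by))$. These are completely positive because $\phi(x,\by)$ is positive, and they sum to a channel for each $\by$, so they form a family of instruments.

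\textbf{Main obstacle.} The step needing care is the positivity-to-complete-positivity direction in both inverses. I would not reprove it; I would invoke the Choi theorem as recalled in \cref{sec:choi_iso}. The remaining work is bookkeeping: checking that the basis-level definition of $\choiiso$ matches the ordering and transposition conventions of the contraction in \cref{def:prod}. I would fix the convention so that $(\phi\compfact\rho)(\bx)=\Eop_\bx(\rho)$ holds, as suggested in \cref{sec:QCPT}, which also prepares the subsequent claim that composition of instruments matches the product of their images.
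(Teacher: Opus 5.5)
Your proposal is correct and is essentially the paper's argument: you apply the Choi--Jamiolkowski isomorphism pointwise in the classical values, and your $\choiiso(\Eop)=\sum_{\be}\be\tens\Eop(\be)$ is literally the paper's $(\id[\L(\HH_{\qreg'})]\tens\Eop)(\qcap[\L(\HH_{\qreg'})])$, so no transpose is needed; positivity and the partial-trace condition are then matched with complete positivity and trace preservation exactly as you do. The only difference is presentational: the paper proves both items at once through a generalized \qCPT\ for $(\bX,\qreg)$ given $(\bY,\qreg')$, of which your two items are the cases $\qreg=\emptyset$ and $\bX=\emptyset$, and it writes the inverse explicitly via $\qcup$ and the yanking equations rather than invoking $\choiiso^{-1}$.
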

\begin{proof}
We generalize the definition of \qCPTs (\cref{def:condition_qfactor_qbn}) to get both results as particular cases.
A \qfactor\ $\someqfact$ over $(\bX\uplus\bY,\qreg\uplus\qreg')$ is a \textdef{\qCPT\ for $(\bX,\qreg)$ given $(\bY, \qreg')$} if $\forall \by\in\setofrv{\bY}$:
\begin{equation*}
\left(\sum_{\bX,\qreg} \someqfact\right)(\by)
=
\id[\L(\HH_{\qreg'})]
\end{equation*}

We give a bijection $\Psi$ (along with its inverse $\Psi^{-1}$) between families of quantum instruments $\{\Eop_\by \mid \by\in\setofrv{\bY}\}$, where each $\Eop_\by$ is a quantum instrument $\{\Eop_{\bx,\by} \mid \bx\in\setofrv{\bX}\}$ from $\L(\HH_{\qreg'})$ to $L(\HH_\qreg)$, and \qCPTs for $(\bX,\qreg)$ given $(\bY, \qreg')$.
To simplify notations, we write $(\be_i)_{i}$ (\resp\ $(\be'_k)_{k}$) the canonical orthonormal basis of $\L(\HH_{\qreg})$ (\resp\ $\L(\HH_{\qreg'})$).
We will use the following.
\begin{itemize}
\item
Call $\qcup[\L(\HH_{\qreg'})]:\L(\HH_{\qreg'})\tens\L(\HH_{\qreg'})\to\C$ the completely positive map defined by
\begin{equation*}
\qcup[\L(\HH_{\qreg'})](\be'_k \tens \be'_l) \defeq 1 \text{ if } k=l \text{ and } 0 \text{ otherwise}
\end{equation*}
and extended to the general case by linearity.
\item
Set $\qcap[\L(\HH_{\qreg'})]\defeq \sum_k \be'_k\tens \be'_k$, which is a positive matrix in $\L(\HH_{\qreg'})\tens\L(\HH_{\qreg'})$.
\end{itemize}

Let us now define $\Psi$ and $\Psi^{-1}$.
\begin{itemize}
\item
Given a family of quantum instruments $\{\Eop_\by \mid \by\in\setofrv{\bY}\}$ from $\L(\HH_{\qreg'})$ to $L(\HH_\qreg)$, we set $\Psi(\{\Eop_{\bx,\by} \mid \bx\in\setofrv{\bX}\})$ the \qfactor\ over $(\bX\uplus\bY,\qreg\uplus\qreg')$ defined by:
\begin{equation*}
\left(\Psi(\{\Eop_{\bx,\by} \mid \bx\in\setofrv{\bX}\})\right)(\bx,\by)
\defeq
(\id[\L(\HH_{\qreg'})] \tens \Eop_{\bx,\by})(\qcap[\L(\HH_{\qreg'})])
\end{equation*}

\item
Given a \qCPT\ $\someqfact$ for $(\bX,\qreg)$ given $(\bY, \qreg')$, we define for each $\by\in\setofrv{\bY}$ the following set of maps:
\begin{equation*}
\Psi^{-1}_{\by}(\someqfact)
\defeq
\{ \M \mapsto (\qcup[\L(\HH_{\qreg'})] \tens \id[\L(\HH_\qreg)]) (\M \tens \someqfact(\bx,\by)) \mid \bx\in\setofrv{\bX}\}
\end{equation*}
and pose $\Psi^{-1}(\someqfact) \defeq \{\Psi^{-1}_{\by}(\someqfact)\mid \by\in\setofrv{\by}\}$.
\end{itemize}
By similar computations to that involved for the Choi-Jamiolkowski isomorphism, $\Psi$ sends quantum instruments to \qCPTs, $\Psi^{-1}$ sends \qCPTs to quantum instruments, and $\Psi^{-1}$ is the inverse of $\Psi$ (the later computation using the yanking equations of $\qcup$ and $\qcap$).
\end{proof}

Showing the composition of instruments results in the product of their images by $\Psi$ is also a simple computation (using the yanking equations of $\qcup$ and $\qcap$).
This proves \cref{lem:bij_2_QBNs}.

\begin{remark}
\Cref{lem:sem_qbn_dist} is a corollary of \cref{lem:bij_2_QBNs}, using that the semantics of an instrument-based Bayesian network is a probability distribution.
Another corollary is that \qCPTs are closed under product, because instruments are closed by composition.
\end{remark}

